\newtheorem{theorem}{Theorem}
\newtheorem{example}[theorem]{Example}
\newtheorem{proposition}[theorem]{Proposition}
\newtheorem{remark}[theorem]{Remark}
\newenvironment{proof}[1][Proof]{\textbf{#1.} }{\ \rule{0.5em}{0.5em}}
\def\essinf{\operatorname{ess.\!inf}}
\def\esssup{\operatorname{ess.\!sup}}
\def\bR{\mathbb{R}}
\def\bE{\mathbb{E}}
\def\cM{\mathcal{M}}
\def\cC{\mathcal{C}}
\def\cP{\mathcal{P}}
\def\cQ{\mathcal{Q}}
\def\cD{\mathcal{D}}
\def\cW{\mathcal{W}}
\def\cF{\mathcal{F}}
\begin{document}

\title{Time-consistency of cash-subadditive risk measures}
\author{Elisa Mastrogiacomo \thanks{Dipartimento di Statistica e Metodi Quantitativi. University of Milano-Bicocca, Italy. e-mail: elisa.mastrogiacomo@unimib.it} and Emanuela Rosazza Gianin \thanks{Dipartimento di Statistica e Metodi Quantitativi. University of Milano-Bicocca, Italy. e-mail: emanuela.rosazza1@unimib.it} \thanks{Part of this work was done while the last author was visiting the
ZIF center in Bielefeld, Germany, as member of the Research Group in Robust Finance:
Strategic Power, Knightian Uncertainty, and the Foundations of Economic Policy Advice.
The Financial support, the warm hospitality of ZIF center and the stimulating atmosphere
are gratefully acknowledged.}}
\maketitle

\begin{abstract}
  The main goal of this paper is to investigate under which conditions cash-subadditive convex dynamic risk measures are time-consistent. Proceeding as in Detlefsen and Scandolo \cite{detlef-scandolo} and inspired by their result, we give a dual representation of dynamic cash-subadditive convex risk measures (that can also be seen as particular case of the dual quasiconvex representation). The main result of the paper consists in providing, in the cash-subadditive case, a sufficient condition for strong time-consistency (or recursivity) in terms of a generalized cocycle condition. On one hand, our result can be seen as an extension to cash-subadditive convex dynamic risk measures of Theorem 2.5 in Bion-Nadal \cite{bion-nadal-FS}; on the other hand, it is weaker since strong time-consistency is not fully characterized. Finally, we exploit the relation between different notions of time-consistency.
\end{abstract}

\section{Introduction}

Starting from the seminal work of Artzner et al. \cite{ArDeEbHe99} on coherent risk measures, an increasing attention has been devoted to quantifying the riskiness of financial positions. Coherent risk measures have been introduced and defined axiomatically by Artzner et al. \cite{ArDeEbHe99} and Delbaen \cite{De00}, by imposing a set of axioms that are reasonable (or, better, coherent) from a financial point of view.  Motivated by liquidity arguments,  F\"ollmer
and Schied \cite{follers} and Frittelli and Rosazza Gianin \cite{frittelli-rg} introduced independently the wider class of convex risk measures by replacing the axioms of positive
homogeneity and subadditivity with the weaker axiom of convexity.

While the notions above deal with quantifying now the riskiness of a financial position (static setting), it is more realistic to consider a dynamic setting where the riskiness of the position would be quantified at any time between the current one and a fixed future horizon. For this reason, dynamic coherent and convex risk measures have been introduced and investigated. See, among many others, Artzner et al. \cite{ArDeEbHeKu04}, Bion-Nadal \cite{bion-nadal-FS}, \cite{bion-nadal-SPA}, Cheridito et al. \cite{CDK3}, Delbaen \cite{delb-mstable}, Detlefsen and Scandolo \cite{detlef-scandolo}, F\"ollmer and Penner
\cite{follmer-penner}, Frittelli and Rosazza Gianin \cite{frittelli-rg2}, Kl\"oppel and
Schweizer \cite{kloppel-schweizier} and
Riedel \cite{Ri04}.

A key property, in the dynamic setting, is the notion of time-consistency. Among the different notions of time-consistency introduced and
studied in the literature, the most widely used is the so-called strong time-consistency,
corresponding to recursivity. While time-consistency of dynamic coherent risk measures is strongly related to $m$-stability (or rectangularity) of the set of probability measures appearing in the dual representation of such risk measures (see Delbaen \cite{delb-mstable}), in the dynamic convex case strong time-consistency has been characterized by means of a decomposition property on acceptance sets (see Cheridito et al. \cite{CDK3}) and in terms of a property (called cocycle) on the minimal penalty term (see Bion-Nadal \cite{bion-nadal-FS} in continuous time and F\"{o}llmer and Penner \cite{follmer-penner} in discrete time). Further studies on time-consistency of risk measures can be found in Acciaio and Penner \cite{AcPe11}, Cheridito and Kupper \cite{CK}, Delbaen et al. \cite{DPRG}, Detlefsen and Scandolo \cite{detlef-scandolo}, Drapeau et al. \cite{drapeau-kupper-tangpi-rg}, Kl\"oppel and
Schweizer \cite{kloppel-schweizier}, Riedel \cite{Ri04}, Roorda and Schumacher \cite{RoSch15} and Rosazza Gianin \cite{ros}, among others.

Although in the aforementioned works on static and dynamic risk measures cash-additivity is often assumed, such axiom has been recently discussed by El Karoui and Ravanelli \cite{ELK-rav} (and later on also by Cerreia-Vioglio et al. \cite{CMMM2}, Drapeau and Kupper \cite{drapeau-kupper} and Frittelli and Maggis \cite{fritt-mag-dyn}). As argued by these authors, indeed, cash-additivity is too strong, mainly when dealing with stochastic interest rates or ambiguity over discounting. Motivated by this argument, the wider class of cash-subadditive risk measures has been introduced by El Karoui and Ravanelli \cite{ELK-rav} by replacing cash-additivity with cash-subadditivity.\medskip

In this paper, we focus on dynamic convex cash-subadditive dynamic risk
measures, in the perspective of generalizing the results established in the literature
for dynamic convex cash-additive risk measures.

First, we provide a dual representation of dynamic convex cash-subadditive risk measures by means of a penalty term and of discount factors and by following an approach that is different from the one used by El Karoui and Ravanelli \cite{ELK-rav} for static risk measures.

Second, we prove that a generalized cocycle condition on the penalty term together with suitable conditions on the discount factors and on the set of probability measures guarantees the strong time-consistency of the corresponding dynamic risk measure.

Finally, we discuss and relate the different notions of time-consistency proposed in the literature in the cash-subadditive case. In particular, we study the link between strong, weak and weak* time-consistency. We emphasize that, because of the lack of cash-additivity,
one cannot expect the equivalence between these notions. Although strong time-consistency implies weak time-consistency, we show that the converse is no more true when cash-additivity is replaced by cash-subadditivity.
On the
one hand, our results on time-consistency can be seen as an extension to dynamic convex cash-subadditive risk measures
of  \cite[Theorem 2.5]{bion-nadal-FS} and \cite[Proposition 5]{detlef-scandolo};
on the other hand, they are weaker since we are not able to prove full characterizations.\bigskip

The paper is organized as follows: in Section \ref{sec:assump} we introduce notations and basic assumptions used in the paper; in Section \ref{sec:rr} we review and refine the dual representation results of dynamic
convex risk measures when the cash-additivity assumption
is replaced by cash-subadditivity. The main result of the paper can be found in Section \ref{sec:cocycle}, where we provide sufficient conditions (in terms of a generalized cocycle property for the penalty and of pasting properties on the discount factors and on the set of probability measures) for a dynamic convex cash-subadditive risk measure to be time-consistent. Finally, the link between different notions of time-consistency is considered in Section \ref{sec:properties}. Example and counterexamples which emphasize the differences between the cash-additive and the cash-subadditive case are also considered. Section \ref{conclusion} collects some concluding remarks, while the appendix contains the proofs of the main results.

\section{Notation and initial remarks}\label{sec:assump}

Let $(\Omega, \mathcal{F})$ be a measurable space and let $P$ be a probability measure defined on it.

Denote by $\cM_{1,f} \triangleq \cM_{1,f}(\Omega,\cF)$ the class of all finitely additive set functions $Q: \cF \to [0,1]$ that are normalized to $1$ and by $\cM_1=\cM_{1}(\Omega,\cF)$ the subset of $\cM_{1,f}$ formed by all the $\sigma$-additive elements of $\cM_{1,f}$, that is the class of all probability measures on $(\Omega,\cF)$. Furthermore, $\cM_{s,f}$ (resp. $\cM_{s}$) will denote the set of all finitely additive (resp. $\sigma$-additive) measures $\mu$ on $(\Omega,\cF)$ such that
$0\leq \mu(\Omega)\leq 1$ (called subprobabilities).
\begin{align*}
   \cM_{1}(P)\triangleq \cM_{1}(\Omega,\cF,P) \qquad \textrm{(resp. $\cM_{s}(P)\triangleq \cM_{s}(\Omega,\cF,P))$}
\end{align*}
will denote the set of all $\sigma$-additive probability (resp. subprobability) measures on $(\Omega,\cF)$ that are absolutely continuous with respect to $P$.

With an abuse of notation, in the following $\bE_Q[X]$ will denote the integral of $X$ with respect to $Q\in \cM_{1,f}$.

Notice that any element $\mu\in \cM_{s,f}$ (resp. in $\cM_{s}$) can be decomposed as $\mu(\cdot)=aQ(\cdot)$ for some constant $a\in [0,1]$ and some measure $Q\in \cM_{1,f}$ (resp. in $\cM_{1}$).
If $\mu=0$ then $a=0$ and $Q\in \cM_{1,f}$ (resp. in $\cM_{1}$) is not uniquely identified.
\bigskip

In the following, we will focus on random variables on $L^{\infty}(\Omega, \mathcal{F},P)$, where $L^{\infty}(\Omega, \mathcal{F},P)$ is the space of all essentially bounded random variables on $(\Omega, \mathcal{F},P)$. For simplicity of notations, we will often write $L^{\infty}$ instead of $L^{\infty}(\Omega, \mathcal{F},P)$.
We recall that the topological dual space of $L^{\infty}$ endowed with the $\Vert \cdot \Vert_{\infty}$ is $ba$, while the one of $L^{\infty}$ endowed with the weak$^*$ topology $\sigma(L^{\infty},L^1)$ is $L^1$.
\medskip

Let $T$ be a finite fixed time horizon, let $\mathcal{T}$ be either the set $\{0,1,...,T\}$ (discrete time) or the time interval $[0,T]$ (continuous time) and let $(\mathcal{F}_t)_{t \in \mathcal{T}}$ be a filtration of $\mathcal{F}$ satisfying $\mathcal{F}_0=\{\emptyset; \Omega \}$ and $\mathcal{F}_T=\mathcal{F}$.
\bigskip

We recall that a static risk measure $\rho$ is a functional $\rho:L^{\infty} \to \Bbb R$ satisfying some suitable assumptions. (Static) coherent and convex cash-additive risk measures have been widely discussed and studied in the literature. See Artzner et al. \cite{ArDeEbHe99}, Delbaen \cite{delb}, F\"{o}llmer and Schied \cite{follers}, \cite{follmers-book}, Frittelli and Rosazza Gianin \cite{frittelli-rg}, among many others. In a dynamic setting, a dynamic risk measure has been defined as a family $(\rho_t)_{t \in \mathcal{T}}$ of functionals $\rho_t: L^{\infty}(\mathcal{F}_T) \to L^{\infty}(\mathcal{F}_t)$ taking into account all the information available till time $t$. Similar axioms as in the static case are sometimes imposed to dynamic risk measures (see, among others, Artzner et al. \cite{ArDeEbHeKu04}, Detlefsen and Scandolo \cite{detlef-scandolo}, F\"{o}llmer and Penner \cite{follmer-penner}, Frittelli and Rosazza Gianin \cite{frittelli-rg2}, Kl\"{o}ppel and Schweizer \cite{kloppel-schweizier} and Riedel \cite{Ri04}). Here below a list of the main ones:
\smallskip

\noindent - convexity: $\rho_t( \alpha X + (1- \alpha) Y) \leq \alpha \rho_t(X)+ (1 - \alpha) \rho_t(Y)$ for any $t \in \mathcal{T}$, $X, Y \in L^{\infty}(\mathcal{F}_T)$, $\alpha \in [0,1]$; \smallskip

\noindent - monotonicity: $X \leq Y$, $P$-a.s., implies that $\rho_t(X) \geq \rho_t(Y)$ for any $t \in \mathcal{T}$; \smallskip

\noindent - continuity from above (respectively below): $X_n \downarrow _n  X$ (resp. $X_n \uparrow _n  X$) implies that $\lim_n \rho_t(X_n) =\rho_t(X)$ for any $t \in \mathcal{T}$; \smallskip

\noindent - cash-additivity: for any $t \in \mathcal{T}$,  $\rho_t (X+m_t) = \rho_t (X)-m_t$ for any $X \in L^{\infty}(\mathcal{F}_T)$ and $m_t \in L^{\infty}(\mathcal{F}_t)$; \smallskip

\noindent - normalization: $\rho_t(0)=0$ for any $t \in \mathcal{T}$; \smallskip

\noindent - constancy: $\rho_t(m_t) = -m_t $ for any $m_t \in L^{\infty}(\mathcal{F}_t)$.\medskip

A more technical axiom is the following:\smallskip

\noindent - regularity: for any $t \in [0,T]$, $\rho_t (X 1_A + Y 1_{A^c})= 1_A \rho_t(X)+ 1_{A^c} \rho_t (Y)$ for any $A \in \mathcal{F}_t$, $X,Y \in L^{\infty}(\mathcal{F}_T)$.
\medskip

Quite recently, axioms of cash-additivity and of convexity have been discussed (see El Karoui and Ravanelli \cite{ELK-rav} and Cerreia-Vioglio et al. \cite{CMMM2}) and weakened, respectively, by:\smallskip

\noindent - cash-subadditivity: for any $t \in \mathcal{T}$,  $\rho_t (X+m_t) \geq \rho_t (X)-m_t$ for any $X \in L^{\infty}(\mathcal{F}_T)$ and $m_t \in L^{\infty}_+(\mathcal{F}_t)$; \smallskip

\noindent - quasiconvexity: $\rho_t( \alpha X + (1- \alpha) Y) \leq \esssup \{ \rho_t(X); \rho_t(Y) \}$ for any $t \in \mathcal{T}$, $X, Y \in L^{\infty}(\mathcal{F}_T)$, $\alpha \in [0,1]$. \smallskip

We postpone to the next section the discussion of cash-additivity versus cash-subadditivity.
Notice that cash-subadditivity and normalization imply that $\rho_t(m_t)=\rho_t(0+m_t) \geq \rho_t(0)-m_t=-m_t$ (and, similarly, $\rho_t(-m_t) \leq m_t$) for any $m_t \in L^{\infty}_+(\mathcal{F}_t)$.
\bigskip

From now on, we will denote by $\rho_{s,t}:L^{\infty}(\mathcal{F}_t) \to L^{\infty}(\mathcal{F}_s)$ (for $s \leq t$) and by $\rho_s= \rho_{s,T}$.

A desirable property for a dynamic risk measure is the so-called time-consistency that allows to relate the same risk measure at different times. Different notions of time-consistency exist, however, in the literature: \smallskip

\noindent - strong time-consistency (shortly, time-consistency) or recursivity:

\noindent $\rho_{s,t} \left(-\rho_{t,u} (X) \right)=\rho_{s,u} (X)$ for any $X \in L^{\infty}(\mathcal{F}_u)$ and $s,t,u$ with $0 \leq s \leq t \leq u \leq T$; \smallskip

\noindent - weak time-consistency:

\noindent if $\rho_{t,u} (X) \geq \rho_{t,u} (Y)$, then $\rho_{s,u} (X) \geq \rho_{s,u} (Y)$ for any $s \in [t,u]$; \smallskip

\noindent - weak* time-consistency:

\noindent if $\rho_{t,u} (X) = \rho_{t,u} (Y)$, then $\rho_{s,u} (X) = \rho_{s,u} (Y)$ for any $s \in [t,u]$.
\medskip

While strong time-consistency guarantees that the riskiness of a position at time $s$ can be equivalently calculated in two ways (that is, directly at time $s$ or in two steps - from time $u$ to time $t$ and then to time $s$), weak and weak* time-consistency imply that if a position is riskier than (or as risky as) another at time $t$ then the same holds at any time $s \leq t$. Further notions of time-consistency can be also found in the recent paper of Roorda and Schumacher \cite{RoSch15}.

It is well known (see F\"{o}llmer and Penner \cite{follmer-penner}, Delbaen \cite{delb-mstable} and Detlefsen and Scandolo \cite{detlef-scandolo}, among others) that for convex cash-additive risk measures the three notions above are equivalent. Moreover, for dynamic convex cash-additive risk measures Bion-Nadal \cite{bion-nadal-FS}, \cite{bion-nadal-SPA} proved that time-consistency is strongly related to the so-called cocycle property of the penalty term of the dynamic risk measure.
\medskip

The main aim of this paper is to investigate what happens in the cash-subadditive case and to provide sufficient conditions for a convex cash-subadditive risk measure to be time-consistent.
Obviously, for general risk measures weak time-consistency implies weak* time-consistency.

\section{Dual representation of Cash-Subadditive Risk Measures}\label{sec:rr}

As emphasized in El Karoui and Ravanelli \cite{ELK-rav}, assuming cash-additivity is not always reasonable for a risk measure
mainly when dealing with stochastic interest rates or ambiguity over discounting. Motivated by these arguments, the aforementioned authors proposed to replace cash-additivity with
the weaker assumption of cash-subadditivity.

\subsection{Static setting}

In the following, we recall from El Karoui and Ravanelli \cite{ELK-rav} the dual representation of convex cash-subadditive risk measures, similar to the one for convex cash-additive risk measures but in terms of subprobabilities, and we provide some additional results that will be useful in the paper.
In particular, a characterization of those convex cash-subadditive measures of risk on $L^\infty$
which can be represented by a penalty function concentrated on probability measures is given.
\smallskip

\begin{proposition}[see Theorem 4.3 in El Karoui and Ravanelli \cite{ELK-rav}]
Any convex, monotone, normalized and cash-subadditive risk measure $\rho:L^{\infty} \to \Bbb R$ can be represented as
  \begin{align}\label{eq:repElRa09}
      \rho(X)=\sup\limits_{\mu\in \mathcal{M}_{s,f}} \left\{ \bE_\mu[-X]-\bar{c}(\mu) \right\},
   \end{align}
where $\bar{c}$ is the minimal penalty function defined by
\begin{equation}\label{eq:minpen}
     \bar{c}(\mu)= \sup_{X\in L^\infty} \left\{ \bE_\mu[-X]-\rho(X)\right\}.
\end{equation}
\end{proposition}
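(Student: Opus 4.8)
The plan is to obtain the representation from a Fenchel--Moreau duality argument on the dual pair $(L^\infty,ba)$, exactly as in the cash-additive case but keeping track of the weaker cash-subadditivity axiom. The easy half, $\rho(X)\geq\sup_{\mu\in\cM_{s,f}}\{\bE_\mu[-X]-\bar c(\mu)\}$, is immediate from \eqref{eq:minpen}: for every $X$ and every $\mu\in\cM_{s,f}$ one has $\bar c(\mu)\geq\bE_\mu[-X]-\rho(X)$, hence $\rho(X)\geq\bE_\mu[-X]-\bar c(\mu)$, and taking the supremum over $\mu$ gives the inequality. Normalization already yields $\bar c(\mu)\geq\bE_\mu[0]-\rho(0)=0$, so $\bar c$ takes values in $[0,+\infty]$ and everything is well posed.

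For the reverse inequality I would first record that $\rho$ is $1$-Lipschitz with respect to $\Vert\cdot\Vert_\infty$: if $\Vert X-Y\Vert_\infty\leq\varepsilon$ then $X\leq Y+\varepsilon$ and $Y\leq X+\varepsilon$ $P$-a.s., so monotonicity followed by cash-subadditivity gives $\rho(X)\geq\rho(Y+\varepsilon)\geq\rho(Y)-\varepsilon$ and, symmetrically, $\rho(Y)\geq\rho(X)-\varepsilon$. Hence $\rho$ is a finite, convex, norm-continuous (in particular norm-lower-semicontinuous) functional on $L^\infty$, whose topological dual is $ba$. The Fenchel--Moreau theorem then produces $\rho(X)=\sup_{\ell\in ba}\{\ell(X)-\rho^*(\ell)\}$, with $\rho^*(\ell)=\sup_{X\in L^\infty}\{\ell(X)-\rho(X)\}$.

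The remaining step, which I expect to be the only delicate one, is to show that $\rho^*$ is finite only at functionals of the form $\ell=-\mu$ with $\mu\in\cM_{s,f}$, so that the supremum above reduces to the claimed one. Monotonicity handles positivity: since $X\geq0$ implies $\rho(X)\leq\rho(0)=0$, one has $\rho^*(\ell)\geq\lambda\,\ell(X)$ for all $\lambda>0$ and all $X\geq0$, which forces $\ell(X)\leq0$ on the positive cone, i.e.\ $\ell=-\mu$ with $\mu\in ba$, $\mu\geq0$; moreover $\rho^*(-\mu)=\bar c(\mu)$ by definition. Cash-subadditivity together with normalization handle the total mass: they give $\rho(-m)\leq m$ for every constant $m\geq0$, whence $\bar c(\mu)\geq\mu(\Omega)\,m-\rho(-m)\geq(\mu(\Omega)-1)\,m$, and letting $m\to\infty$ shows $\bar c(\mu)=+\infty$ as soon as $\mu(\Omega)>1$. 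Therefore the effective domain of $\rho^*$ is contained in $\{-\mu:\mu\in\cM_{s,f}\}$, and the Fenchel--Moreau representation collapses to $\rho(X)=\sup_{\mu\in\cM_{s,f}}\{\bE_\mu[-X]-\bar c(\mu)\}$, as asserted. (Since this is precisely Theorem 4.3 in \cite{ELK-rav}, one could alternatively just invoke it; the argument above is the self-contained route, and the genuine work is entirely in pinning down the effective domain and getting the sign/normalization bookkeeping right, the rest being the classical convex-duality machinery.)
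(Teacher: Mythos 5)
Your argument is correct, and it is worth noting that the paper itself does not prove this proposition at all: it simply imports it from El Karoui and Ravanelli (Theorem 4.3), whose original proof proceeds by a reduction -- one attaches an extra dimension to turn the cash-subadditive measure into a cash-additive one and then invokes the classical representation for cash-additive convex risk measures. Your route is genuinely different from that source: a direct Fenchel--Moreau argument on the dual pair $(L^{\infty}, ba)$, using the $1$-Lipschitz property (hence norm lower semicontinuity) to get $\rho=\rho^{**}$, monotonicity to force the effective domain of $\rho^*$ into the negative of the positive cone of $ba$, and cash-subadditivity plus normalization (via $\rho(-m)\leq m$ and $m\to\infty$) to force total mass at most $1$. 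Interestingly, this is exactly the style of proof the authors themselves adopt later for Theorem \ref{thm:repstatic}: their appendix proof uses the biconjugation theorem on $(L^{\infty},L^{1})$ under continuity from above, and the mass bound $X'(1)\leq 1$ is obtained there by the very same estimate $m\bigl(X'(1)-1\bigr)\leq\rho^{*}(X')$ with $m\to\infty$. So your proof is, in effect, the $ba$-level (no continuity-from-above) analogue of the paper's own appendix argument, and it buys a self-contained derivation of the cited result; the only cosmetic caveat is that if $L^{\infty}$ is read as $L^{\infty}(\Omega,\mathcal{F},P)$ (equivalence classes), the elements of $ba$ you obtain are those vanishing on $P$-null sets, so the supremum is naturally over the $P$-absolutely continuous members of $\mathcal{M}_{s,f}$ -- which is harmless, since the easy inequality holds for every $\mu\in\mathcal{M}_{s,f}$ for which $\bE_\mu[-X]$ is defined.
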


\begin{remark}
Since any subprobability $\mu \in \mathcal{M}_{s,f}$ can be written as $\mu(\cdot)=aQ(\cdot)$ for some $a\in [0,1]$ and $Q\in \cM_{1,f}$, then representation \eqref{eq:repElRa09} (and the analogous with any penalty function) can be rewritten as follows
   \begin{align}\label{eq:rep-paper}
      \rho(X)&=\sup\limits_{a \in [0,1],Q\in \mathcal{M}_{1,f}} \left\{ a\bE_Q[-X]-\bar{c}(aQ)
 \right\}\\
 &=\sup\limits_{a \in [0,1],Q\in \mathcal{M}_{1,f}} \left\{ a\bE_Q[-X]-c(aQ) \right\}
   \end{align}
where $c$ is any penalty function.
   In the rest of the paper we will always refer to this representation, where the penalty functions
 are seen as maps on $[0,1] \times \cM_{1,f}$, instead of $\cM_{s,f}$.
\end{remark}

As recalled below, continuity from below of $\rho$ guarantees that the dual representation in \eqref{eq:repElRa09} can be done in terms of probability measures, not only of finitely additive measures.

\begin{proposition}[see Theorem 4.3 and Corollary 4.4 in \cite{ELK-rav}]
     Let $\rho: L^{\infty} \to \Bbb R$ be a convex, monotone, normalized and cash-subadditive measure of risk which is continuous from below.

Suppose that $c$ is any penalty function on $[0,1]\times \cM_{1,f}$ representing $\rho$. Then $c$ is concentrated on $[0,1]\times \cM_1$, i.e.
\begin{align*}
      c(aQ)<\infty \qquad \Longrightarrow \qquad  Q \ \textrm{is $\sigma$-additive (hence a  probability measure)}.
\end{align*}
\end{proposition}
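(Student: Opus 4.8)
The plan is to reduce the statement to a property of the \emph{minimal} penalty function $\bar c$ and then to extract $\sigma$-additivity of $Q$ from continuity from below, by testing $\rho$ against a monotone family of suitably scaled indicators.

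First I would note that every penalty function representing $\rho$ dominates the minimal one. Indeed, if $c$ represents $\rho$ as in \eqref{eq:rep-paper}, then $\rho(X)\ge a\bE_Q[-X]-c(aQ)$ for all $X\in L^\infty$ and all $(a,Q)\in[0,1]\times\cM_{1,f}$, hence $c(aQ)\ge\sup_{X\in L^\infty}\{a\bE_Q[-X]-\rho(X)\}=\bar c(aQ)$, where $\bar c$ is the minimal penalty of \eqref{eq:minpen} read as a map on $[0,1]\times\cM_{1,f}$. Consequently $c(aQ)<\infty$ implies $\bar c(aQ)<\infty$, and it suffices to prove the implication for $\bar c$, namely: $\bar c(aQ)<\infty \ \Rightarrow\ Q$ is $\sigma$-additive.

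So I would fix $(a,Q)$ with $\bar c(aQ)<\infty$ and set $\mu\triangleq aQ$. If $a=0$ there is nothing to prove; if $a>0$, then $\sigma$-additivity of $\mu$ is equivalent to that of $Q$, and since $\mu$ is a bounded nonnegative finitely additive set function it is $\sigma$-additive if and only if it is continuous at $\emptyset$. Thus I would take an arbitrary sequence $A_n\in\cF$ with $A_n\downarrow\emptyset$ and show $\mu(A_n)\to 0$. For a fixed $\lambda>0$, the random variables $X_n\triangleq-\lambda 1_{A_n}$ satisfy $X_n\uparrow 0$ (because $1_{A_n}\downarrow 0$), so continuity from below together with normalization gives $\rho(X_n)\to\rho(0)=0$. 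On the other hand, the dual representation \eqref{eq:rep-paper} with the penalty $\bar c$ yields, for every $n$,
\[
\rho(X_n)\ \ge\ a\bE_Q[-X_n]-\bar c(\mu)\ =\ \lambda\,\mu(A_n)-\bar c(\mu),
\]
so that $\lambda\,\mu(A_n)\le\rho(X_n)+\bar c(\mu)$. Letting $n\to\infty$ gives $\lambda\limsup_n\mu(A_n)\le\bar c(\mu)$, i.e. $\limsup_n\mu(A_n)\le\bar c(\mu)/\lambda$; since $\bar c(\mu)<\infty$ and $\lambda>0$ is arbitrary, letting $\lambda\to\infty$ forces $\mu(A_n)\to 0$, which is exactly the claim.

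There is no deep obstacle here: the argument is essentially the cash-subadditive transcription of the classical F\"ollmer--Schied proof, with the free scalar $\lambda$ playing the role of the cash-additive rescaling. The only point that needs a word of care is the standing convention that the elements of $\cM_{1,f}$ (hence of $\cM_{s,f}$) are understood to be absolutely continuous with respect to $P$, so that the expectations in the dual representation are well defined on equivalence classes in $L^\infty(\Omega,\cF,P)$ and ``continuity at $\emptyset$'' is indeed the correct characterization of countable additivity to invoke.
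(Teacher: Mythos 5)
Your argument is correct, but note that the paper itself does not prove this proposition: it is quoted from El Karoui and Ravanelli \cite{ELK-rav}, whose proof (as the paper itself recalls in Section 3) is indirect, associating to the cash-subadditive risk measure a cash-additive one on an enlarged space and then applying the known cash-additive representation results. You argue directly: reduce to the minimal penalty $\bar c\le c$, fix $(a,Q)$ with $\bar c(aQ)<\infty$, and test $\rho$ against $X_n=-\lambda 1_{A_n}$ with $A_n\downarrow\emptyset$; continuity from below plus normalization give $\rho(X_n)\to 0$, while $\rho(X_n)\ge a\bE_Q[-X_n]-\bar c(aQ)$ (which follows from the definition of $\bar c$ alone, no representation theorem needed) yields $\limsup_n aQ(A_n)\le \bar c(aQ)/\lambda$, and letting $\lambda\to\infty$ forces continuity at $\emptyset$, hence $\sigma$-additivity. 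This is the cash-subadditive transcription of the classical F\"ollmer--Schied test-function argument for cash-additive measures in \cite{follmers-book}, with the free scalar $\lambda$ playing the role that translation invariance plays there; it is more self-contained than the route via \cite{ELK-rav} and makes clear that only normalization, continuity from below and the definition of the penalty are used. Two minor points: the reduction to $\bar c$ is dispensable, since the same inequality holds directly for any representing $c$; and your treatment of $a=0$ as vacuous is the right reading, consistent with the paper's remark that $\mu=0$ does not identify $Q$. Your closing caveat on absolute continuity with respect to $P$ (so that $\bE_Q$ is well defined on equivalence classes in $L^{\infty}(\Omega,\cF,P)$ and the continuity-at-$\emptyset$ criterion is the relevant one) is appropriate and worth keeping.
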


We focus now on those risk measures which are defined on a probability space
$(\Omega, \cF,P)$. From now on $L^\infty$ will denote $L^{\infty}=L^\infty(\Omega,\cF,P)$ while $\cM_{1}(P)$ (respectively $\cM_{s}(P)$)
the set of all $\sigma$-additive probability (resp. subprobability) measures on $(\Omega,\cF)$ which are absolutely continuous with respect to $P$.

Notice that any risk measure $\rho:L^{\infty} \to \mathbb{R} \cup \{+\infty\}$ satisfying monotonicity, cash-subadditivity and normalization is finite-valued. Indeed: by monotonicity $\rho(\esssup X) \leq \rho(X) \leq \rho(\essinf X)$.  If $\essinf(X) \geq 0$, then $\rho(X) \leq \rho(\essinf X)\leq 0$ and $\rho(X) \geq \rho(\esssup X)\geq -\esssup(X) \in \mathbb R$. Otherwise, $\rho(X) \leq \rho(\essinf X)\leq -\essinf (X) \in \mathbb R$ and $\rho(X)=\rho(X-(\esssup(X)+1)+1) \geq \rho(1) \geq -1$.\medskip

The following result characterizes those convex, cash-subadditive, monotone, normalized risk measures that can be represented in terms of subprobability measures.  The proof is driven by means of Fenchel-Moreau biconjugate theorem or, in particular, by using the representation of general convex risk measures (see Frittelli and Rosazza Gianin \cite{frittelli-rg} and F\"{o}llmer and Schied \cite{follmers-book}). This approach is different from the one used in the proof of Theorem 4.3 in El Karoui and Ravanelli \cite{ELK-rav}. In that case, indeed, the aforementioned authors prove that to any cash-subadditive risk measure it corresponds a cash-additive one by adding a new dimension, hence they apply the results already known for cash-additive risk measures.

\begin{theorem}\label{thm:repstatic}
    Let $\rho : L^{\infty} \to \mathbb R$ be a convex, monotone, cash-subadditive and normalized risk measure. Then the following are equivalent:
\begin{itemize}
\item[(i)]  \label{it:cast} $\rho$ is continuous from above;
\item[(ii)]  $\rho$ is lower semi-continuous with respect to the $\sigma(L^{\infty}, L^1)$-topology;
\item[(iii)]  \label{it:repminst} $\rho$ can be represented as
\begin{equation} \label{eq:repr_rho_bar_c}
\rho(X)=\sup\limits_{a \in [0,1],Q\in \mathcal{Q}} \left\{ a\bE_Q[-X]-\bar{c}(aQ)
 \right\}
\end{equation}
where $\mathcal{Q} \subseteq \mathcal{M}_1(P)$ and where the minimal penalty function $\bar{c}:[0,1] \times \mathcal{Q} \to [0;+\infty]$ is defined by
\begin{align}\label{eq:rep-gen}
 \bar{c}(aQ)=\sup_{X\in L^\infty} \left\{  a\bE_{Q}[-X]-\rho(X)\right\}.
\end{align}
\item[(iv)]  \label{it:repst} $\rho$ can be represented as in \eqref{eq:repr_rho_bar_c} by means of some penalty function
$c: [0,1]\times \cQ \to [0; +\infty]$.
\end{itemize}
\end{theorem}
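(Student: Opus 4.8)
My plan is to establish the cycle $(iii)\Rightarrow(iv)\Rightarrow(ii)\Rightarrow(i)\Rightarrow(iii)$, so that the cash-subadditive features are only needed in the last step. The implication $(iii)\Rightarrow(iv)$ is immediate, the minimal penalty $\bar c$ of \eqref{eq:rep-gen} being a penalty function representing $\rho$. For $(iv)\Rightarrow(ii)$, note that for fixed $a\in[0,1]$ and $Q\in\cQ\subseteq\cM_1(P)$ the affine map $X\mapsto a\bE_Q[-X]-c(aQ)=\bE[-X\,(a\,dQ/dP)]-c(aQ)$ is $\sigma(L^\infty,L^1)$-continuous, since $a\,dQ/dP\in L^1$, hence $\sigma(L^\infty,L^1)$-lower semicontinuous; therefore so is $\rho$, being a supremum of such maps. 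For $(ii)\Rightarrow(i)$: if $X_n\downarrow X$ $P$-a.s.\ with $X_n,X\in L^\infty$, then $\|X_n\|_\infty\le\|X_1\|_\infty\vee\|X\|_\infty$, so by dominated convergence $\bE[X_n g]\to\bE[Xg]$ for all $g\in L^1$, i.e.\ $X_n\to X$ in $\sigma(L^\infty,L^1)$; lower semicontinuity yields $\rho(X)\le\liminf_n\rho(X_n)$, while monotonicity ($X_n\ge X$) yields $\rho(X_n)\le\rho(X)$, so $\rho(X_n)\to\rho(X)$.

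The substantive step is $(i)\Rightarrow(iii)$, which I would split in two. \emph{Step A}: continuity from above implies $\sigma(L^\infty,L^1)$-lower semicontinuity. This is classical for general convex monotone risk measures and may be quoted from \cite{frittelli-rg,follmers-book}; to reproduce it, one shows each sublevel set $\{\rho\le c\}$ is $\sigma(L^\infty,L^1)$-closed via the Krein--Smulian theorem, i.e.\ it suffices to check closedness on each ball $\{\|X\|_\infty\le r\}$. A weak$^*$-convergent sequence $X_n\to X$ in such a ball is norm bounded and converges weakly in $L^1$, so by Mazur's lemma suitable convex combinations $Y_m\in\mathrm{conv}\{X_n:n\ge m\}$ converge to $X$ in $L^1$; along an a.s.-convergent subsequence the majorants $Z_j:=\sup_{i\ge j}Y_{m_i}$ satisfy $\|Z_j\|_\infty\le r$, $Z_j\downarrow X$ and, by convexity and monotonicity, $\rho(Z_j)\le\sup_{n\ge m_j}\rho(X_n)$. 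After first replacing $(X_n)$ by a subsequence along which $\rho(X_n)\to\liminf_n\rho(X_n)$, continuity from above then gives $\rho(X)\le\liminf_n\rho(X_n)$, hence weak$^*$-closedness of $\{\rho\le c\}$.

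\emph{Step B}: since $\rho$ is convex, proper (it is real-valued) and $\sigma(L^\infty,L^1)$-lower semicontinuous, the Fenchel--Moreau theorem for the dual pair $(L^\infty,L^1)$ gives $\rho(X)=\sup_{g\in L^1}\{\bE[-Xg]-\alpha(g)\}$ with $\alpha(g)=\sup_{X\in L^\infty}\{\bE[-Xg]-\rho(X)\}\ge-\rho(0)=0$. Monotonicity forces $\alpha(g)=+\infty$ whenever $P(g<0)>0$ (test with $X=n\,1_{\{g<0\}}$, $n\to\infty$, using $\rho(X)\le\rho(0)=0$), and cash-subadditivity together with normalization forces $\alpha(g)=+\infty$ whenever $\bE[g]>1$ (test with $X\equiv-m$, $m\ge0$: $\bE[mg]-\rho(-m)\ge m(\bE[g]-1)\to+\infty$, where $\rho(-m)\le m$ follows from $\rho(0)\ge\rho(-m)-m$). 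Thus $\alpha$ is finite only on densities $g=a\,dQ/dP$ with $a=\bE[g]\in[0,1]$ and $Q\in\cM_1(P)$ ($Q$ arbitrary when $a=0$), so $\bE[-Xg]=a\bE_Q[-X]$ and we obtain exactly \eqref{eq:repr_rho_bar_c} with $\bar c(aQ)=\alpha(a\,dQ/dP)$ as in \eqref{eq:rep-gen}; restricting the supremum to $\cQ:=\{Q\in\cM_1(P):\bar c(aQ)<\infty\text{ for some }a\in[0,1]\}\subseteq\cM_1(P)$ leaves it unchanged, which is (iii).

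The main obstacle is Step A: that is where the genuine functional analysis lies (Krein--Smulian, plus the Mazur/a.s.-subsequence/monotone-majorant device needed to turn continuity from above into weak$^*$-lower semicontinuity). Cash-subadditivity enters only in Step B, contributing the single extra constraint $a\le1$ relative to the cash-additive case, so the new content over \cite{frittelli-rg,follmers-book,ELK-rav} is comparatively modest.
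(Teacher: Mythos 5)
Your proposal is correct and follows essentially the same route as the paper: the substantive step is the Fenchel--Moreau biconjugation over the pair $(L^{\infty},L^{1})$ together with the observation that monotonicity forces nonnegative dual densities and that cash-subadditivity plus normalization (via $\rho(-m)\le m$, tested against constants) forces total mass at most $1$, which is exactly the paper's argument for $(ii)\Rightarrow(iii)$. The only differences are cosmetic: you arrange the implications as the cycle $(iii)\Rightarrow(iv)\Rightarrow(ii)\Rightarrow(i)\Rightarrow(iii)$ and sketch the classical continuity-from-above/lower-semicontinuity equivalences (Krein--Smulian, dominated convergence) where the paper simply cites Frittelli--Maggis and a Lemma 4.20-type argument from F\"ollmer--Schied.
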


The proof of the previous result is postponed to the Appendix (see section \ref{appendix: proof 1}).

\subsection{Dynamic setting}

In the previous section we saw that any (static) convex cash-subadditive and continuous from above risk measure $\rho: L^\infty(\Omega,\cF,P) \to \bR$ can be represented as
\begin{equation} \label{eq: static_repres}
     \rho(X)=\sup_{a\in [0,1],Q  \in \cQ } \left\{ a \mathbb{E}_Q[-X]-c(aQ)\right\}, \qquad X\in L^\infty,
\end{equation}
in terms of a set $\mathcal{Q} \subseteq \cM_1(P)$ of probability measures and of a {\it penalty function} $c: [0,1] \times \cQ \to [0,+\infty]$.
We focus now on a dynamic setting where we prove a similar dual representation for dynamic convex cash-subadditive risk measures that are continuous from above. To this aim we follow a different approach from the one used in El Karoui and Ravanelli \cite{ELK-rav} for the static case where the authors associated to any cash-subadditive risk measure a cash-additive one by adding a new dimension so to be able to use classical results for cash-additive risk measures. To be more precise, we will follow an approach similar to the one used by Detlefsen and Scandolo \cite{detlef-scandolo} for cash-additive risk measures.
\medskip

Let $[0,T]$ be a time interval and $(\mathcal{F}_t)_{t \in [0,T]}$ be a filtration of $\mathcal{F}$ such that $\mathcal{F}_0=\{\emptyset; \Omega \}$ and $\mathcal{F}_T=\mathcal{F}$. Denote by $\cD$ the set of all adapted stochastic processes $(D_t)_{t\in [0,T]}$ taking values in $[0,1]$ and by $\cQ_t$ the following set of probability measures reducing to $P$ on $\mathcal{F}_t$, that is
\begin{equation*}
  \cQ_t \triangleq \left\{ Q \ \textrm{on} \ (\Omega,\cF)| \ Q\ll P \ \textrm{ and } Q=P \textrm{ on } \cF_t \right\}, \quad t\in [0,T].
\end{equation*}

Moreover, in the following we will consider a generalized notion of penalty term $c_t(DQ)$ (for any $t \in [0,T]$) that is defined for any $D \in \mathcal{D}$ and $Q \in \mathcal{Q}$, and is an $\mathcal{F}_t$-measurable non-negative random variable taking also $+\infty$ as possible value. From now on, in a dynamic setting with penalty term we will mean $c_t$ (with $t \in [0,T]$) as above.\smallskip

The following result guarantees that the dynamic version of \eqref{eq: static_repres} is a dynamic convex, monotone, cash-subadditive and normalized risk measure.

\begin{proposition} \label{prop: properties of rho}
Given a penalty term $c_t$ satisfying $\essinf _{(D,Q) \in \mathcal{D} \times \mathcal{Q}} c_t(DQ)=0$, the dynamic risk measure defined by
\begin{equation} \label{eq: form-rho}
\rho_t(X)= \esssup _{(D,Q) \in \mathcal{D} \times \mathcal{Q}} \{ D_t \bE_Q[ \left. -X \right| \mathcal{F}_t] -c_t(DQ) \}, \quad X\in L^\infty (\mathcal{F}_T), t\in [0,T],
\end{equation}
is convex, monotone, cash-subadditive and normalized and taking values in $L^{\infty}(\mathcal{F}_t)$.
\end{proposition}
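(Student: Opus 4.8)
The plan is to verify each of the four axioms directly from the representation \eqref{eq: form-rho}, exploiting standard facts about essential suprema of families of random variables and the tower property of conditional expectation. First, $\mathcal F_t$-measurability of $\rho_t(X)$ and membership in $L^\infty(\mathcal F_t)$: each term $D_t\,\bE_Q[-X\mid\mathcal F_t]-c_t(DQ)$ is $\mathcal F_t$-measurable by hypothesis on $c_t$ and by definition of conditional expectation, so the essential supremum over the family is $\mathcal F_t$-measurable as well; boundedness follows since $0\le D_t\le 1$ gives $D_t\,\bE_Q[-X\mid\mathcal F_t]\le \|X\|_\infty$ and $c_t\ge 0$, so $\rho_t(X)\le\|X\|_\infty$, while the condition $\essinf_{(D,Q)}c_t(DQ)=0$ together with the choice giving small penalty yields the lower bound $\rho_t(X)\ge -\|X\|_\infty$ (more carefully: pick a sequence $(D^n,Q^n)$ with $c_t(D^nQ^n)\to 0$ and bound $D^n_t\bE_{Q^n}[-X\mid\mathcal F_t]\ge -\|X\|_\infty$).

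Next I would dispatch monotonicity, convexity and cash-subadditivity, each of which reduces to the fact that $X\mapsto D_t\,\bE_Q[-X\mid\mathcal F_t]-c_t(DQ)$ has the desired property for every fixed $(D,Q)$, and that taking an $\esssup$ over a fixed index set preserves these properties. For monotonicity: if $X\le Y$ $P$-a.s.\ then $\bE_Q[-X\mid\mathcal F_t]\ge\bE_Q[-Y\mid\mathcal F_t]$ (using $Q\ll P$, so the $P$-a.s.\ inequality is also $Q$-a.s.), hence $D_t\bE_Q[-X\mid\mathcal F_t]\ge D_t\bE_Q[-Y\mid\mathcal F_t]$ since $D_t\ge 0$, and the penalty term is unaffected, so $\rho_t(X)\ge\rho_t(Y)$. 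For convexity: linearity of conditional expectation gives, for $\alpha\in[0,1]$, $D_t\bE_Q[-(\alpha X+(1-\alpha)Y)\mid\mathcal F_t]-c_t(DQ)=\alpha(D_t\bE_Q[-X\mid\mathcal F_t]-c_t(DQ))+(1-\alpha)(D_t\bE_Q[-Y\mid\mathcal F_t]-c_t(DQ))\le\alpha\rho_t(X)+(1-\alpha)\rho_t(Y)$, and taking $\esssup$ over $(D,Q)$ on the left yields convexity.

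For cash-subadditivity: let $m_t\in L^\infty_+(\mathcal F_t)$. Then $\bE_Q[-(X+m_t)\mid\mathcal F_t]=\bE_Q[-X\mid\mathcal F_t]-m_t$ by $\mathcal F_t$-measurability of $m_t$, so $D_t\bE_Q[-(X+m_t)\mid\mathcal F_t]-c_t(DQ)=D_t\bE_Q[-X\mid\mathcal F_t]-c_t(DQ)-D_t m_t\ge D_t\bE_Q[-X\mid\mathcal F_t]-c_t(DQ)-m_t$, where the last inequality uses $0\le D_t\le 1$ and $m_t\ge 0$; taking $\esssup$ over $(D,Q)$ gives $\rho_t(X+m_t)\ge\rho_t(X)-m_t$. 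Finally, normalization $\rho_t(0)=0$: the terms reduce to $-c_t(DQ)\le 0$, so $\rho_t(0)\le 0$; and $\esssup_{(D,Q)}(-c_t(DQ))=-\essinf_{(D,Q)}c_t(DQ)=0$ by hypothesis, so $\rho_t(0)=0$.

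The routine part is really the interchange between $\esssup$ and the various pointwise manipulations; the only point requiring a little care is that the essential supremum of a (possibly uncountable) family is itself measurable and that the lattice operations used above are valid for it — this is the classical result that for any family $(g_i)_{i\in I}$ of $\mathcal F_t$-measurable functions there is a countable subfamily realizing $\esssup_i g_i$ and that $\esssup$ distributes over sums with a fixed function and preserves order. I expect the main (minor) obstacle to be establishing the lower bound $\rho_t(X)\ge -\|X\|_\infty$ in a fully rigorous way, since it needs the $\essinf$-hypothesis on $c_t$ combined with a uniform bound on the conditional-expectation terms; everything else is a direct verification.
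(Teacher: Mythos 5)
Your proof is correct and follows essentially the same route as the paper: the paper likewise verifies the properties termwise under the essential supremum (it only writes out the cash-subadditivity step, using exactly your observation that $m_t(1-D_t)\geq 0$ since $0\leq D_t\leq 1$, and declares convexity, monotonicity, normalization and $\rho_t(X)\in L^{\infty}(\mathcal{F}_t)$ straightforward). Your additional care with the lower bound via $\rho_t(X)\geq -\Vert X\Vert_\infty-\essinf_{(D,Q)}c_t(DQ)$ and with measurability of the essential supremum just fills in details the paper leaves implicit.
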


\begin{proof}
Convexity, monotonicity and normalization (as well as $\rho_t (\cdot) \in L^{\infty}(\mathcal{F}_t)$) are straightforward.

Cash-subadditivity: for any $t \in [0,T]$, $m_t \in L^{\infty}_+(\mathcal{F}_t)$ and $X \in L^{\infty} (\mathcal{F}_T)$ it holds that
\begin{eqnarray*}
\rho_t(X+m_t)+m_t &=& \esssup_{(D,Q) \in \mathcal{D} \times \mathcal{Q}} \{ D_t \bE_Q[ \left. -X-m_t \right| \mathcal{F}_t] -c_t(DQ) \}+m_t \\
&=& \esssup_{(D,Q) \in \mathcal{D} \times \mathcal{Q}} \{ m_t(1- D_t)+D_t \bE_Q[ \left. -X \right| \mathcal{F}_t] -c_t(DQ) \}\\
&\geq & \esssup_{(D,Q) \in \mathcal{D} \times \mathcal{Q}} \{ D_t \bE_Q[ \left. -X \right| \mathcal{F}_t] -c_t(DQ) \}=\rho_t(X),
\end{eqnarray*}
hence the thesis.
\end{proof}

Under a continuity and a regularity condition also a converse result holds, as shown below.

\begin{proposition} \label{prop: repr-dyn-csa}
    Let $(\rho_t)_{t\in [0,T]}$ be a dynamic convex, monotone, cash-subadditive, normalized and regular risk measure with $\rho_t: L^\infty (\mathcal{F}_T) \to L^\infty(\cF_t)$. Then the following are equivalent:
\begin{enumerate}
\item[(i)] for any $t \in [0,T]$, $\rho_t$ is continuous from above;
\item[(ii)]  for any $t \in [0,T]$, $\rho_t$ can be represented as
\begin{equation} \label{eq: dynamic_repres_c}
   \rho_t(X)= {\rm ess.sup}_{(D,Q) \in \cD \times \cQ_t }\left\{ D_t\bE_Q[-X|\cF_t]- c_t(D Q)\right\}, \ X\in L^\infty, \ t\in [0,T]
\end{equation}
for some penalty term $c_t$;

\item[(iii)]  for any $t \in [0,T]$, $\rho_t$ can be represented as in \eqref{eq: dynamic_repres_c} in terms of the minimal penalty term $\bar{c_t}$, that is
\begin{equation}\label{eq:ct*}
    \bar{c_t}(DQ)\triangleq{\rm ess.sup}_{X\in L^\infty} \left\{ D_t\bE_Q[-X|\cF_t]- \rho_t(X)\right\},
\end{equation}
for $(D,Q)\in \cD \times \cQ_t$.
\end{enumerate}
\end{proposition}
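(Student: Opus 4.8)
The plan is to establish the cycle of implications $(iii) \Rightarrow (ii) \Rightarrow (i) \Rightarrow (iii)$, reducing most of the work to the static result (Theorem \ref{thm:repstatic}) applied conditionally at each time $t$. The implication $(iii)\Rightarrow(ii)$ is immediate, since the minimal penalty term $\bar{c_t}$ is a particular penalty term; one only has to remark that $\bar c_t \geq 0$ (take $X=0$ in \eqref{eq:ct*} and use normalization) and that $\essinf_{(D,Q)} \bar c_t(DQ) = 0$, so that Proposition \ref{prop: properties of rho} applies. For $(ii)\Rightarrow(i)$: given the representation \eqref{eq: dynamic_repres_c}, continuity from above follows because $\rho_t$ is an essential supremum of a family of functionals $X \mapsto D_t\bE_Q[-X|\cF_t]$, each of which is continuous from above (by dominated convergence for conditional expectations, using that $|D_t|\le 1$), minus the penalty; more precisely, if $X_n \downarrow X$ then $\rho_t(X_n)$ is nonincreasing and bounded below by $\rho_t(X)$, and one shows $\lim_n \rho_t(X_n) \le \rho_t(X)$ by an $\varepsilon$-argument on the essential supremum combined with the monotone/dominated convergence for each fixed pair $(D,Q)$.

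The substantive implication is $(i)\Rightarrow(iii)$. Here I would fix $t\in[0,T]$ and, roughly speaking, regard $\rho_t$ as a "conditional" static cash-subadditive risk measure on $L^\infty(\cF_T)$ with values in $L^\infty(\cF_t)$. The regularity axiom is exactly what allows a localization/pasting argument: one first proves the representation pointwise on atoms (or, in the non-atomic case, on events $A\in\cF_t$) and then glues the local penalty terms together using regularity to obtain a genuine $\cF_t$-measurable penalty. Concretely, I would introduce the conditional Fenchel-Legendre transform $\bar c_t(DQ) = \esssup_{X\in L^\infty}\{D_t\bE_Q[-X|\cF_t] - \rho_t(X)\}$, show it is $\cF_t$-measurable and nonnegative, and then prove the biconjugate identity $\rho_t(X) = \esssup_{(D,Q)}\{D_t\bE_Q[-X|\cF_t] - \bar c_t(DQ)\}$. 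The inequality $\ge$ in this identity is the easy Fenchel-Young direction and holds without continuity; the reverse inequality $\le$ is where continuity from above (equivalently, by Theorem \ref{thm:repstatic} applied conditionally, $\sigma(L^\infty,L^1)$-lower semicontinuity) is used, via a conditional Hahn-Banach separation argument — one separates the point $(X,\rho_t(X))$ from the (conditionally convex, lower semicontinuous) epigraph of $\rho_t$ by an $\cF_t$-conditional linear functional, which by the structure of the dual $L^1$ and monotonicity must be of the form $Y\mapsto D_t\bE_Q[-Y|\cF_t]$ with $D_t\in[0,1]$ and $Q\in\cQ_t$ (the constraint $Q=P$ on $\cF_t$ coming precisely from the regularity/locality of $\rho_t$).

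The main obstacle I anticipate is this last conditional separation step and the measurable-selection bookkeeping it entails: one must produce, for each $X$ and each $\varepsilon>0$, a pair $(D,Q)\in\cD\times\cQ_t$ that is $\varepsilon$-optimal \emph{simultaneously on all of $\Omega$} (not just on a fixed atom), and show the resulting family is stable enough that the essential supremum is attained in the claimed form. In the cash-additive case this is handled in Detlefsen–Scandolo \cite{detlef-scandolo} via the identification of $\rho_t$ with a static risk measure on a suitable space; in the cash-subadditive case one carries along the extra discount coordinate $D_t\in[0,1]$, exactly as the static Theorem \ref{thm:repstatic} carries along $a\in[0,1]$. I would therefore structure the argument so that, after fixing $t$, it is a direct conditional transcription of the proof of Theorem \ref{thm:repstatic}, invoking regularity each time locality of the functional is needed and invoking continuity from above only to pass to $\sigma(L^\infty,L^1)$-lower semicontinuity and hence to apply Fenchel–Moreau conditionally. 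Since the statement says the proof is postponed to the Appendix, I would only sketch this reduction here and defer the measurable-selection details.
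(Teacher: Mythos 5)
Your reduction of the easy implications is fine, but the substantive direction $(i)\Rightarrow(iii)$ is exactly the part you leave open, and the route you sketch is not the one that works here without considerable extra machinery. You propose a conditional Fenchel--Moreau argument via a conditional Hahn--Banach separation of $(X,\rho_t(X))$ from the epigraph, together with a measurable-selection step producing $(D,Q)$ that are $\varepsilon$-optimal simultaneously on all of $\Omega$; you then explicitly defer these details, which are the whole difficulty (there is no off-the-shelf conditional separation theorem delivering functionals of the form $Y\mapsto D_t\bE_Q[-Y|\cF_t]$ with $Q=P$ on $\cF_t$). The paper, following Detlefsen--Scandolo, avoids conditional separation entirely by scalarizing: it defines $\rho_{0,t}(X)\triangleq\bE[\rho_t(X)]$, checks that this is a static convex, monotone, cash-subadditive, normalized risk measure which is continuous from above, and applies the static Theorem \ref{thm:repstatic} to represent $\rho_{0,t}$ over subprobabilities $\mu=aQ$. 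It then (Step 2) decomposes any $\mu$ with $\bar{c}_{0,t}(\mu)<+\infty$ as $\mu=D_t\tilde Q$ with $D_t\triangleq a\,\bE_P[\frac{dQ}{dP}|\cF_t]$ and $\tilde Q$ obtained by renormalizing the density, so that $\tilde Q=P$ on $\cF_t$ \emph{by construction} (not, as you suggest, as a consequence of regularity); regularity together with cash-subadditivity is instead what forces $D_t\le 1$, via the estimate $\bar c_{0,t}(\mu)\ge\sup_{\lambda>0}\lambda(\mu(A)-P(A))=+\infty$ on $A=\{D_t>1\}$. Step 3 proves $\bE[\bar c_t(DQ)]=\bar c_{0,t}(DQ)$ by showing the family $\{D_t\bE_Q[-X|\cF_t]-\rho_t(X)\}$ is upward directed (again using regularity) and invoking F\"ollmer--Schied Theorem A.32 to interchange expectation and essential supremum; this is what dissolves your ``simultaneous $\varepsilon$-optimality'' worry, and it is also why it suffices to compare expectations: since the inequality $\ge$ in \eqref{eq: dynamic_repres_c} is immediate, proving $\bE[\rho_t(X)]\le\bE[\esssup_{(D,Q)}\{\cdots\}]$ already yields equality $P$-a.s. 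As it stands, your plan contains the statement of the hard step but not a proof of it.

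A smaller point: in $(ii)\Rightarrow(i)$ your inequalities are reversed. If $X_n\downarrow X$ then by monotonicity $\rho_t(X_n)$ is nondecreasing and bounded above by $\rho_t(X)$, so $\lim_n\rho_t(X_n)\le\rho_t(X)$ is the trivial direction; what needs proof is $\rho_t(X)\le\liminf_n\rho_t(X_n)$, which the paper gets by monotone convergence for each fixed $(D,Q)$ and the inequality $\esssup_{(D,Q)}\lim_n\{\cdots\}\le\liminf_n\esssup_{(D,Q)}\{\cdots\}$ --- no $\varepsilon$-selection from the essential supremum is needed.
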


The proof of the previous result is postponed to the Appendix (see section \ref{appendix: proof2}).

\section{Time-consistency for cash-subadditive risk measures}\label{sec:cocycle}

In Drapeau et al. \cite{drapeau-kupper-tangpi-rg} and in El Karoui and Ravanelli \cite{ELK-rav}, dynamic convex cash-subadditive risk measures of the following form were considered in a Brownian setting:
\begin{equation} \label{eq: cash-subadd-integr f}
\rho_{t,T}(X)= \esssup _{(\beta,q) \in \mathcal{B} \times \mathcal{Q}} \left\{  E_Q \left[\left. -B_{t,T}^{\beta} X \right| \mathcal{F}_t \right] - E_Q \left[\left. \int_t ^T B_{t,s}^{\beta} f(\beta_s, q_s) ds \right| \mathcal{F}_t \right] \right\},
\end{equation}
where $\mathcal{B}$ is a set of adapted stochastic processes $\beta$, $B_{s,t}^{\beta}= \exp\{-\int_s ^t \beta_u du \}$ is the discount factor associated to $\beta$, $\mathcal{Q}$ is a set of $d$-dimensional adapted stochastic processes $q$ corresponding to probability measures $Q$ via stochastic exponentials, i.e. $E\left[\left. \frac{dQ}{dP} \right| \mathcal{F}_t \right] =\exp\{-\frac{1}{2} \int_0 ^t \Vert q_u \Vert^2 du +\int_0 ^t q_u dW_u \}$, and $f=f(\omega,t, \beta,q):\Omega \times [0,T] \times \Bbb R \times \Bbb R^d \to \Bbb R \cup \{ + \infty\}$ is a given functional.
In particular, the previous dynamic risk measures are shown to be time-consistent and the penalty term $c_{t,T}(\beta,q) =E_Q \left[\left. \int_t ^T B_{t,s}^{\beta} f(\beta_s, q_s) ds \right| \mathcal{F}_t \right]$ satisfies the following generalized cocycle property:
\begin{equation} \label{eq: penalty-from integr f}
c_{t,T}(\beta, q) =c_{t,u}(\beta, q) + E_Q \left[\left. B_{t,u}^{\beta} c_{u,T}(\beta,q) \right| \mathcal{F}_t \right]
\end{equation}
for any $t,u$ s.t. $0 \leq t \leq u \leq T$.
\medskip

Inspired by the results above and by Bion-Nadal \cite{bion-nadal-FS}, \cite{bion-nadal-SPA} (where the classical cocycle property has been shown to be related to time-consistency of dynamic convex cash-additive risk measures), we ask whether the generalized cocycle - eventually together with other conditions - guarantees that the corresponding dynamic convex cash-subadditive risk measure is time-consistent or, better, under which conditions a general dynamic cash-subadditive risk measures is time-consistent. In this section we will give an answer to the question above by following an approach similar to the one used by Bion-Nadal \cite{bion-nadal-FS}, \cite{bion-nadal-SPA} for cash-additivite risk measures.
\medskip

As previously, let $\mathcal{D}$ be the set of all adapted stochastic processes $(D_t)_{t \in [0,T]}$ taking values in $[0,1]$ and let $\mathcal{Q}$ be a set of probability measures (absolutely continuous or equivalent to $P$).
Let now $(\rho_t)_{t \in [0,T]}$ be a dynamic risk measure of the following form:
\begin{equation} \label{eq: form-rho-hypoth}
\rho_t(X)= \esssup _{(D,Q) \in \mathcal{D} \times \mathcal{Q}} \{ D_t \bE_Q[ \left. -X \right| \mathcal{F}_t] -c_t(DQ) \}
\end{equation}
for any $t \in [0,T]$ and $X \in L^{\infty}(\mathcal{F}_t)$ or, more in general,
\begin{equation} \label{eq: form-rho-s,t}
\rho_{t,u}(Y)= \esssup _{(D,Q) \in \mathcal{D} \times \mathcal{Q}} \{ D_{t,u} \bE_Q[ \left. -Y \right| \mathcal{F}_t] -c_{t,u}(DQ) \}
\end{equation}
for any $t,u \in [0,T]$ with $t \leq u$ and $Y \in L^{\infty}(\mathcal{F}_u)$.
With $(D_{t,u})_{t \in [0,u]}$ we mean an adapted stochastic process with time horizon $u$ and taking values in $[0,1]$, with $(c_{t,u})_{t \in [0,u]}$ the generalized penalty term referring to a time horizon $u$, while with $D_t=D_{t,T}$, $c_t=c_{t,T}$ and $\rho_t = \rho _{t,T}$. $D_{t,u}$ can be interpreted as a discount factor. Notice that $D_{t,u}$ is $\mathcal{F}_t$-measurable while in Drapeau et al. \cite{drapeau-kupper-tangpi-rg} and in El Karoui and Ravanelli \cite{ELK-rav} $B_{t,u}$ was $\mathcal{F}_{u}$-measurable. \bigskip

It is worth to emphasize that assuming that $\rho _t$ is of the form \eqref{eq: form-rho-hypoth} is not too restrictive. Indeed, from Proposition \ref{prop: properties of rho} we know that $(\rho_t)_{t \in [0,T]}$ is a dynamic convex, monotone and cash-subadditive risk measure, while from Proposition \ref{prop: repr-dyn-csa} we know that, under continuity from above and regularity, any dynamic convex, monotone and cash-subadditive risk measure can be represented as in \eqref{eq: form-rho-hypoth} once $\mathcal{Q}$ is replaced by $\mathcal{Q}_t$.
\medskip

We consider now the following  assumptions:
\medskip

\begin{itemize}

\item[-] \textbf{assumptions on $\mathcal{Q}$:}

\begin{enumerate}

\item[(Qa)] $\mathcal{Q}$ is a subset of probability measures that are all equivalent to the initial $P$.

\item[(Qb)] (stability for pasting) for any $Q_1, Q_2 \in \mathcal{Q}$ and for any $s,t,u \in [0,T]$ with $s \leq t \leq u$ there exists a probability measure $Q^* \in \mathcal{Q}$ (pasting between $Q_1$ and $Q_2$) such that:
$$
\bE_{Q^*} \left[\left. X \right| \mathcal{F}_s \right]=\bE_{Q_1} \left[\left. \bE_{Q_2} \left[\left. X \right| \mathcal{F}_t \right] \right| \mathcal{F}_s \right], \ \text{ for any } X \in L^{\infty}(\mathcal{F}_u).
$$

\item[(Qc)] (stability for bifurcation) for any $Q_1, Q_2 \in \mathcal{Q}$, for any $s,t \in [0,T]$ with $s \leq t$ and for any $A \in \mathcal{F}_s$ there exists a probability measure $Q^* \in \mathcal{Q}$ such that:
$$
\bE_{Q^*} \left[\left. X \right| \mathcal{F}_s \right]= 1_A \bE_{Q_1} \left[\left. X \right| \mathcal{F}_s \right]+ 1_{A^c} \bE_{Q_2} \left[\left. X \right| \mathcal{F}_s \right], \ \text{ for any } X \in L^{\infty}(\mathcal{F}_t).
$$
\end{enumerate}

\item[-] \textbf{assumption on $\mathcal{D}$:}

\begin{enumerate}

\item[(Da)] for any $D^1, D^2 \in \mathcal{D}$, for any $s,t \in [0,T]$ with $s \leq t$ and for any $A \in \mathcal{F}_s$ there exists $D^* \in \mathcal{D}$ such that:
$$
D^* _{s,t}= 1_A D^1 _{s,t}+ 1_{A^c} D^2 _{s,t}.
$$

\end{enumerate}

\item[-] \textbf{assumption on $\mathcal{Q}$ and $\mathcal{D}$ jointly:}

\begin{enumerate}
\item[(QDa)] (stability for joint pasting) for any $Q_1, Q_2 \in \mathcal{Q}$, $D^1, D^2 \in \mathcal{D}$ and for any $s,t,u \in [0,T]$ with $s \leq t \leq u$ there exist $Q^* \in \mathcal{Q}$ and $D^* \in \mathcal{D}$ such that:
$$
D^* _{s,u} \bE_{Q^*} \left[\left. X \right| \mathcal{F}_s \right]= D^1 _{s,t} \bE_{Q_1} \left[\left. D^2 _{t,u}  \bE_{Q_2} \left[\left. X \right| \mathcal{F}_t \right] \right| \mathcal{F}_s \right], \ \text{ for any } X \in L^{\infty}(\mathcal{F}_u).
$$
Moreover: when both $Q_1=Q_2=Q$ and $D^1=D^2=D$ are satisfied, it holds that $D^*=D$ and $Q^*=Q$.
\end{enumerate}

\item[-] \textbf{assumptions on the penalty term $c$:}

\begin{enumerate}

\item[(Ca)] $c_{s,t}(DQ)$ is a $\mathcal{F}_s$-measurable non-negative random variable taking, eventually, also $+ \infty$ as possible value and defined for any $D \in \mathcal{D}$ and $Q \in \mathcal{Q}$.

\item[(Cb)] (generalized locality) for any $Q_1, Q_2 \in \mathcal{Q}$, $D^1, D^2 \in \mathcal{D}$, $s,t \in [0,T]$ with $s \leq t $ and for any $A \in \mathcal{F}_s$ it holds that:
\begin{eqnarray*}
&&\mbox{if } 1_A \bE_{Q_1} \left[\left. X \right| \mathcal{F}_s \right]= 1_A \bE_{Q_2} \left[\left. X \right| \mathcal{F}_s \right] \text{ for any } X \in L^{\infty}(\mathcal{F}_t) \, \text{ and } 1_A D^1_{s,t}=1_A D^2 _{s,t} \\
&& \Rightarrow 1_A c_{s,t} (D^1 Q^1)=1_A c_{s,t} (D^2 Q^2).
\end{eqnarray*}

\item[(Cc)] (generalized cocycle) for any $Q \in \mathcal{Q}$, $D \in \mathcal{D}$, for any $s,t,u \in [0,T]$ with $s \leq t \leq u$:
\begin{equation} \label{eq: cocycle}
c_{s,u}(DQ)=c_{s,t}(DQ)+ \bE_Q \left[\left. D_{s,t} c_{t,u} (DQ) \right| \mathcal{F}_s \right].
\end{equation}

\end{enumerate}
\end{itemize}
\smallskip

Assumptions (Qa), (Qb) and (Qc) on $\mathcal{Q}$ are the the same as in Bion-Nadal \cite{bion-nadal-FS}, where (Qc) is called stability for bifurcation. Assumptions (Cb) and (Cc) on $c$ reduce to the classical ones (see Bion-Nadal \cite{bion-nadal-FS}, \cite{bion-nadal-SPA}) when $\mathcal{D}=\{D: D \equiv 1 \}$ (corresponding to the cash-additive case). Assumptions on $\mathcal{D}$ as well as the assumption (QDa) on $\mathcal{D}$ and $\mathcal{Q}$ jointly are new. Notice that this last assumption reduces to stability for pasting on $\mathcal{Q}$ (Qb) when $\mathcal{D}=\{D: D \equiv 1 \}$.

\begin{theorem} \label{thm: time-cons}
If $(\rho_{s,t})_{0 \leq s \leq t \leq T}$ is defined as in \eqref{eq: form-rho-s,t}, with $\mathcal{D}$, $\mathcal{Q}$ and $c$ satisfying all the assumptions above, then it is (strongly) time-consistent.
\end{theorem}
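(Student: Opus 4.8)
\medskip
\noindent\textbf{Sketch of the argument.} The plan is to prove the recursion $\rho_{s,t}(-\rho_{t,u}(X))=\rho_{s,u}(X)$ for $0\le s\le t\le u\le T$ and $X\in L^\infty(\mathcal{F}_u)$ by establishing the two inequalities separately. Note first that $-\rho_{t,u}(X)\in L^\infty(\mathcal{F}_t)$, so it is a legitimate argument for $\rho_{s,t}$, and
\[
\rho_{s,t}(-\rho_{t,u}(X))=\esssup_{(D,Q)\in\mathcal{D}\times\mathcal{Q}}\bigl\{D_{s,t}\bE_Q[\rho_{t,u}(X)|\mathcal{F}_s]-c_{s,t}(DQ)\bigr\}.
\]
For the inequality $\ge$, fix $(D,Q)\in\mathcal{D}\times\mathcal{Q}$. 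From \eqref{eq: form-rho-s,t} with the same pair, $\rho_{t,u}(X)\ge D_{t,u}\bE_Q[-X|\mathcal{F}_t]-c_{t,u}(DQ)$; applying $\bE_Q[\cdot|\mathcal{F}_s]$, multiplying by the nonnegative $\mathcal{F}_s$-measurable factor $D_{s,t}$, using the diagonal clause of (QDa) (which with $Q_1=Q_2=Q$, $D^1=D^2=D$ forces $D_{s,t}\bE_Q[D_{t,u}\bE_Q[-X|\mathcal{F}_t]|\mathcal{F}_s]=D_{s,u}\bE_Q[-X|\mathcal{F}_s]$) and then the generalized cocycle (Cc), one obtains
\[
D_{s,t}\bE_Q[\rho_{t,u}(X)|\mathcal{F}_s]-c_{s,t}(DQ)\;\ge\;D_{s,u}\bE_Q[-X|\mathcal{F}_s]-c_{s,u}(DQ).
\]
Taking $\esssup$ over $(D,Q)$ on both sides yields $\rho_{s,t}(-\rho_{t,u}(X))\ge\rho_{s,u}(X)$ (with the usual conventions for $\pm\infty$ in the penalty).

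For the inequality $\le$, I would first show that the family
\[
\mathcal{A}:=\bigl\{D'_{t,u}\bE_{Q'}[-X|\mathcal{F}_t]-c_{t,u}(D'Q'):(D',Q')\in\mathcal{D}\times\mathcal{Q}\bigr\}
\]
is upward directed. Given two pairs, let $A\in\mathcal{F}_t$ be the set on which the first term dominates, splice the probability measures by stability for bifurcation (Qc), splice the discount processes by (Da), and use the generalized locality (Cb) to verify that the spliced pair reproduces the pointwise maximum of the two terms of $\mathcal{A}$. Hence there is an increasing sequence $(D^n,Q^n)$ with $D^n_{t,u}\bE_{Q^n}[-X|\mathcal{F}_t]-c_{t,u}(D^nQ^n)\uparrow\rho_{t,u}(X)$ $P$-a.s. (and, by a standard truncation using that the sequence may be chosen with an integrable first term, conditional monotone convergence applies despite possibly infinite penalties). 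Thus, for any fixed $(D,Q)$,
\[
D_{s,t}\bE_Q[\rho_{t,u}(X)|\mathcal{F}_s]-c_{s,t}(DQ)=\lim_{n}\uparrow\Bigl(D_{s,t}\bE_Q[D^n_{t,u}\bE_{Q^n}[-X|\mathcal{F}_t]\,|\,\mathcal{F}_s]-\bE_Q[D_{s,t}\,c_{t,u}(D^nQ^n)\,|\,\mathcal{F}_s]-c_{s,t}(DQ)\Bigr).
\]

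Next, for each $n$ apply the joint pasting (QDa) to $(D,Q)$ and $(D^n,Q^n)$ to obtain $(D^{*,n},Q^{*,n})\in\mathcal{D}\times\mathcal{Q}$ with $D^{*,n}_{s,u}\bE_{Q^{*,n}}[-X|\mathcal{F}_s]=D_{s,t}\bE_Q[D^n_{t,u}\bE_{Q^n}[-X|\mathcal{F}_t]|\mathcal{F}_s]$. The crucial lemma is the matching penalty estimate
\[
c_{s,u}(D^{*,n}Q^{*,n})\;\le\;c_{s,t}(DQ)+\bE_Q[\,D_{s,t}\,c_{t,u}(D^nQ^n)\,|\,\mathcal{F}_s\,],
\]
which I would derive by applying the generalized cocycle (Cc) to $(D^{*,n},Q^{*,n})$ and then identifying $c_{s,t}(D^{*,n}Q^{*,n})$ with $c_{s,t}(DQ)$ and $c_{t,u}(D^{*,n}Q^{*,n})$ with $c_{t,u}(D^nQ^n)$ via the generalized locality (Cb), using that the pasting in (QDa) restricts correctly to the subintervals $[s,t]$ and $[t,u]$. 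Granting this, the $n$-th term above is bounded by $D^{*,n}_{s,u}\bE_{Q^{*,n}}[-X|\mathcal{F}_s]-c_{s,u}(D^{*,n}Q^{*,n})\le\rho_{s,u}(X)$; letting $n\to\infty$ and taking $\esssup$ over $(D,Q)$ gives $\rho_{s,t}(-\rho_{t,u}(X))\le\rho_{s,u}(X)$, which together with the first part completes the proof.

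I expect the main obstacle to be the penalty pasting estimate of the last paragraph: this is where the generalized cocycle (Cc), the generalized locality (Cb) and the precise structure of the joint pasting (QDa) must be combined, and in particular one must carefully check that the pasted pair $(D^*,Q^*)$ has the expected restrictions to the subintervals so that (Cb) can be invoked at both times $s$ and $t$. A secondary technical point is the justification of conditional monotone convergence when the penalties take the value $+\infty$, which requires the truncation and integrability bookkeeping indicated above, together with the $\mathcal{F}_t$-measurability of the random variables involved so that $D_{s,t}$ and the conditional expectations can be moved freely.
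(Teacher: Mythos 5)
Your proposal is correct and follows essentially the same route as the paper's proof in the appendix: upward directedness of the dual family via (Qc), (Da) and the generalized locality (Cb), monotone approximation with conditional monotone convergence, the joint pasting (QDa) combined with the generalized cocycle (Cc) to control the penalty of the pasted pair for the inequality $\rho_{s,t}(-\rho_{t,u}(X))\le\rho_{s,u}(X)$, and the diagonal clause of (QDa) together with (Cc) for the reverse inequality (your fixed-$(D,Q)$ argument there is just a mild streamlining of the paper's sequence-based version). The penalty-pasting identity you isolate as the crucial lemma is exactly the step the paper settles by simply invoking (Cc) and (QDa), so the point you flag about the pasted pair restricting correctly to the subintervals $[s,t]$ and $[t,u]$ is shared by, rather than resolved in, the published argument.
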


The proof of the previous result can be found in the appendix (see Section \ref{appendix: proof3}).

\section{Different notions of time-consistency and their relations}\label{sec:properties}

under the assumption of cash-additivity, a key question concerning dynamic convex risk measures is whether there is any relation between the same dynamic risk measure at different times. Although this question has lead to the introduction of several notions of time-consistency, for dynamic convex cash-additive risk measures it is well known (see Proposition 1.16 in \cite{AcPe11}, among others) that the three main notions of time-consistency (strong, weak and weak*) are all equivalent.
In the following, we will investigate what happens for cash-subadditive convex risk measures
and, in particular, whether a similar result still holds (once cash-additivity is weakened by cash-subadditivity).

We will see that in the cash-subadditive case the equivalence of strong, weak and weak* time-consistency is no longer valid while only some implications remain true.
The following result emphasizes the link between strong, weak and weak* time-consistency when only cash-subadditivity holds. On the one hand, this result can be seen as an
extension of Proposition 1.16 in \cite{AcPe11} to the cash-subadditive case; on the other hand, it is weaker than the aforementioned result.

\begin{proposition} \label{prop: link_time-cons2}
Let $(\rho _{t,T})_{t \in [0,T]}$ be a dynamic risk measure.

a) If it satisfies monotonicity, then time-consistency implies weak time-consistency (hence also weak* time-consistency).

b) If it satisfies constancy, then weak* time-consistency implies time-consistency.

c) If it satisfies cash-subadditivity and $\rho_{t}(0)=0$ (normalization), then weak time-consistency implies that
\begin{eqnarray*}
\rho_{s} (X) \leq \rho_{s} \left(-\rho_{t} (X) \right) \\
(\text{resp. } \rho_{s} (X) \geq \rho_{s} \left(-\rho_t (X) \right) )
\end{eqnarray*}
for any  $s,t$ with $0 \leq s \leq t \leq T $ and any $X \in L^{\infty}(\mathcal{F}_T)$ s.t. $\rho_{t} (X) \leq 0$ (resp. $\rho_{t} (X) \geq 0$).
\end{proposition}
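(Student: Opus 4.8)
The plan is to treat the three parts separately, since each relies on a different minimal axiom. Part (a) is the cleanest: assuming monotonicity alone, I would start from the hypothesis $\rho_{t,u}(X) \geq \rho_{t,u}(Y)$, regard $-\rho_{t,u}(X)$ and $-\rho_{t,u}(Y)$ as elements of $L^{\infty}(\mathcal{F}_t)$, and apply monotonicity of $\rho_{s,t}$ to the inequality $-\rho_{t,u}(X) \leq -\rho_{t,u}(Y)$ to obtain $\rho_{s,t}(-\rho_{t,u}(X)) \geq \rho_{s,t}(-\rho_{t,u}(Y))$. By strong time-consistency the left and right sides equal $\rho_{s,u}(X)$ and $\rho_{s,u}(Y)$ respectively, giving weak time-consistency; weak* then follows from weak (as already noted in the text, weak time-consistency trivially implies weak* time-consistency).

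For part (b), assuming constancy $\rho_{t}(m_t) = -m_t$ for $m_t \in L^{\infty}(\mathcal{F}_t)$, I would fix $X \in L^{\infty}(\mathcal{F}_u)$ and set $m_t := -\rho_{t,u}(X) \in L^{\infty}(\mathcal{F}_t)$. Then by constancy $\rho_{t,u}(-\rho_{t,u}(X)) = \rho_{t,u}(m_t) = -m_t = \rho_{t,u}(X)$, i.e. $X$ and $-\rho_{t,u}(X)$ have the same risk at time $t$. Applying weak* time-consistency then yields $\rho_{s,u}(-\rho_{t,u}(X)) = \rho_{s,u}(X)$ for all $s \leq t$, which is precisely strong time-consistency. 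This is really the standard argument (as in Acciaio–Penner); the only subtlety is making sure constancy is being used on the appropriate sub-family $\rho_{t,u}$, which causes no difficulty.

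Part (c) is the genuinely new part and where the cash-subadditivity structure enters. Here I would again set $m_t := -\rho_{t}(X) \in L^{\infty}(\mathcal{F}_t)$ and aim to compare $\rho_{t}(-\rho_{t}(X)) = \rho_t(m_t)$ with $\rho_{t}(X)$. Under the hypothesis $\rho_{t}(X) \leq 0$ we have $m_t = -\rho_{t}(X) \in L^{\infty}_+(\mathcal{F}_t)$, so the consequence of cash-subadditivity and normalization recorded in the text, namely $\rho_t(m_t) \geq -m_t$ for $m_t \geq 0$, gives $\rho_{t}(-\rho_{t}(X)) \geq -m_t = \rho_{t}(X)$. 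Weak time-consistency then propagates this ordering backwards: from $\rho_{t}(-\rho_{t}(X)) \geq \rho_{t}(X)$ we get $\rho_{s}(-\rho_{t}(X)) \geq \rho_{s}(X)$ for every $s \leq t$, which is the stated inequality $\rho_s(X) \leq \rho_s(-\rho_t(X))$. The symmetric case $\rho_{t}(X) \geq 0$ is handled identically using $\rho_t(-m_t) \leq m_t$ for $m_t \geq 0$ (equivalently the inequality $\rho_t(-\rho_t(X)) \leq \rho_t(X)$ when $-\rho_t(X) \leq 0$), followed by weak time-consistency again.

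The only place requiring care — and the main (mild) obstacle — is part (c): one must invoke weak time-consistency with $X$ replaced by a constant in one slot and $-\rho_t(X)$ in the other (i.e. compare $\rho_{t}(-\rho_{t}(X))$ against $\rho_{t}(X)$ as two values in $L^{\infty}(\mathcal{F}_t)$), and verify that the direction of the inequality is preserved under $\rho_{s,t}$ in exactly the way the definition of weak time-consistency allows. Since weak time-consistency is stated as "$\rho_{t,u}(X) \geq \rho_{t,u}(Y) \Rightarrow \rho_{s,u}(X) \geq \rho_{s,u}(Y)$", and here both $X$ and $-\rho_t(X)$ are in $L^{\infty}(\mathcal{F}_T)$ (the latter being $\mathcal{F}_t$-measurable), the hypothesis applies directly; no further regularity or continuity is needed. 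The sign condition $\rho_t(X) \leq 0$ (resp. $\geq 0$) is exactly what is needed to place $-\rho_t(X)$ in $L^{\infty}_+$ (resp. $L^{\infty}_-$) so that the one-sided cash-subadditivity estimate is available.
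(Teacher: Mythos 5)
Your proposal is correct and follows essentially the same route as the paper: part (a) is the standard monotonicity-plus-recursivity argument (which the paper delegates to F\"ollmer--Penner), part (b) uses constancy to get $\rho_{t}(-\rho_{t}(X))=\rho_{t}(X)$ and then weak* time-consistency, and part (c) uses the one-sided bounds $\rho_t(m_t)\geq -m_t$, $\rho_t(-m_t)\leq m_t$ (from cash-subadditivity and normalization) applied to $m_t=-\rho_t(X)$, followed by weak time-consistency, exactly as in the paper. The only difference is that you spell out explicitly the steps the paper cites or leaves implicit, which causes no issue.
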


\begin{proof}
a) can be proved exactly as in F\"{o}llmer and Penner \cite{follmer-penner} where cash-additivity is not needed.\smallskip

b) The proof can be done similarly as in F\"{o}llmer and Penner \cite{follmer-penner}.

By constancy, indeed, $\rho_{t}(X)= \rho_{t} (-\rho_{t} (X))$ holds for any $X \in L^{\infty}(\mathcal{F}_t)$ and $t \in [0,T]$.
Weak* time-consistency thus implies that $\rho_{s}(X)= \rho_{s} (-\rho_{t} (X))$ for any $s \in [0,t]$, hence time-consistency.\smallskip

c) Take any $s,t$ with $0 \leq s \leq t \leq T $ and any $X \in L^{\infty}(\mathcal{F}_T)$ s.t. $\rho_{t} (X) \leq 0$.
By $\rho_t(X) \leq 0$ and cash-subadditivity it follows that $\rho_{t} (-\rho_{t} (X)) \geq \rho_{t}(X)$. So, by weak time-consistency, $\rho_{s} (-\rho_{t} (X)) \geq \rho_{s}(X)$.

The case where $\rho_{t} (X) \leq 0$ can be checked similarly.
\end{proof}
\medskip

The above result provides the following implications for monotone risk measures:
\begin{itemize}
\item strong time-consistency $\Rightarrow$ weak time-consistency ($\Rightarrow$ weak* time-consistency)
\item \textit{under constancy:} strong time-consistency $\Leftarrow$ weak* time consistency
\end{itemize}

The following example emphasizes that weak time-consistency does not guarantee strong time-consistency in general when cash-additivity is replaced by cash-subadditivity,.

\begin{example} \rm
     Consider the following dynamic risk measure:
		\begin{align*}
		    \rho_t(X) = \frac{1}{\gamma_t} \bE\left[(-X)^+|\cF_t\right],
				\qquad X \in L^\infty, \ t\geq 0,
		\end{align*}
		with $\gamma_t \in L^\infty(\mathcal{F}_t)$ satisfying $\gamma_t > 1$.
		Such kind of dynamic risk measure generalizes to the dynamic setting the notion of put premium risk measure introduced in
		El Karoui and Ravanelli \cite[pag. 569]{ELK-rav}. As in Corollary 3.4 of \cite{ELK-rav}
		it can be proved that $(\rho_t)_{t\geq 0}$ is a dynamic cash-subadditive risk measure. Moreover, $\rho_t$ is convex, positively homogeneous and monotone
		for any $t\geq 0$.
		
		We prove now that $(\rho_t)_{t \geq 0}$ is weakly time-consistent but \textit{not} strongly time-consistent.
		
		\noindent {\it Weak time-consistency of $(\rho_t)_{t\geq 0}$}.
		Suppose that for a given $t\geq 0$ and $X,Y\in L^\infty$ we have $\rho_{t}(X) \leq \rho_{t}(Y)$. This is equivalent to say that
		\begin{align}\label{eq:tc}
		     \bE\left[(-X)^+|\cF_{t}\right] \leq \bE\left[(-Y)^+|\cF_{t}\right].
		\end{align}
		Now consider $\rho_s(X)$ and $\rho_s(Y)$ for any $s \in [0,t]$. By definition of $\rho_s$ and by inequality \eqref{eq:tc} we get
		\begin{align*}
		     \rho_s(X) &= \gamma_s^{-1}\bE\left[(-X)^+|\cF_{s}\right] =
				 \gamma_s^{-1}\bE\left[\bE\left[(-X)^+|\cF_{t}\right]|\cF_s\right]\\
				 & \leq \gamma_s^{-1}\bE\left[\bE\left[(-Y)^+|\cF_{t}\right]|\cF_s\right] =\gamma_s^{-1}\bE\left[(-Y)^+|\cF_{s}\right]	= \rho_s(Y).			
		\end{align*}
		Hence $(\rho_t)_{t\geq 0}$ is weakly time-consistent.
		\smallskip
		
		\noindent {\it Non strong time consistency}. We are going to prove the following strict inequality:
		$$
		    \rho_s (-\rho_{t}(X)) < \rho_s (X), \qquad \mbox{ for some } X \mbox{ and } 0 \leq s \leq t,
		$$
		which clearly implies that $(\rho_t)_{t\geq 0}$ cannot be strong time-consistent.
		
        For $s \in [0,t]$, we have
		\begin{align*}
		     \rho_s (-\rho_{t}(X))&= \rho_s\left(
				-\gamma_{t}^{-1}\bE\left[(-X)^+|\cF_{t}\right]\right)\\
				&= \gamma_{s}^{-1}\bE\left[\gamma_{t}^{-1}\bE\left[(-X)^+|\cF_{t}\right]|\cF_s\right].		
		\end{align*}
		Take now any $X$ with $(-X)^+ >0$ $P$-a.s. (e.g. $X=-m$ for $m \in \mathbb R$, $m>0$). Since
		$\gamma_{t}$ is $\cF_{t}$-measurable, we conclude that
		\begin{align*}
		     \rho_s (-\rho_{t}(X))&=
				\gamma_{s}^{-1}\bE\left[\bE\left[\gamma_{t}^{-1}(-X)^+|\cF_{t}\right]|\cF_s\right]\\
				&=\gamma_{s}^{-1}\bE\left[\gamma_{t}^{-1}(-X)^+|\cF_{s}\right]\\
&< \gamma_{s}^{-1}\bE\left[(-X)^+|\cF_{s}\right] =\rho_s (X),		
		\end{align*}
hence strong time-consistency fails to be satisfied by $(\rho_t)_t$.
\end{example}

\section{Concluding remarks}\label{conclusion}

The main results obtained in the paper can be summarized as follows: first, we provide a dual representation of dynamic convex cash-subadditive risk measures by means of a penalty term and of discount factors, result obtained by following a different approach from the one used by El Karoui and Ravanelli \cite{ELK-rav} for static risk measures; second, suitable conditions on the penalty term, on the discount factors and on the set of probability measures have been proved to be sufficient for a dynamic convex cash-subadditive risk measure to be strongly time-consistent; finally, we investigate which relations between the notions of strong, weak and weak* time-consistency hold true in the cash-subadditive case.

\section{Appendix: Proofs} \label{appendix}

\subsection{Proof of Theorem \ref{thm:repstatic}}\label{appendix: proof 1}

$(i) \Leftrightarrow (ii)$ is due to Proposition 2.5 of Frittelli and Maggis \cite{fritt-mag-dyn}.\smallskip

$(ii) \Rightarrow (iii)$.
    Suppose that $\rho$ is lower semi-continuous. Since $\rho$ is finite-valued, convex, monotone and lower semi-continuous, by Frittelli and Rosazza Gianin \cite[Theorem 5]{frittelli-rg} and F\"{o}llmer and Schied \cite[Theorem A.61]{follmers-book} it follows that $\rho$ is representable as follows
		\begin{align}\label{eq:repFM}
		    \rho(X)= \sup_{X'\in\cP} \left\{ X^\prime(-X) - \rho^*(X^\prime)\right\},
		\end{align}
that is in terms of the Fenchel-Moreau conjugate $\rho^*$ of $\rho$ and of a non-empty convex set
		$\cP \subseteq L^1_+$. We may suppose without loss of generality that all the elements in $\mathcal{P}$ satisfy $\rho^*(X') <+\infty$.

		It is sufficient to prove that
		\begin{equation}\label{eq:Psubset}
		\cP \subseteq \cW \triangleq \left\{ X^\prime \in L^1_+: X^\prime(1) \leq 1\right\}.
		\end{equation}
		If \eqref{eq:Psubset} holds, indeed, we can identify any $X^\prime \in \cW$ with a subprobability measure $\mu \in \mathcal{P}^\prime \subseteq \cM_s(P)$ by
setting $X^\prime=\frac{{\rm d}\mu}{{\rm d}P}$. By using indifferently the following notations
$X^\prime(X)=\bE[X^\prime (-X)]= \bE_\mu[-X]$ for $X^\prime \in \cW$,
\begin{align*}%\label{eq:repFM}
		    \rho(X)&= \sup_{X'\in\cP} \left\{ X^\prime(-X) - \rho^*(X^\prime)\right\}\\
				&= \sup_{\mu \in \cP^\prime} \left\{ \bE_\mu[-X
				] - \bar{c}(\mu)\right\}\\
				&= \sup_{(a,Q)\in [0,1]\times\cQ} \left\{ a\bE_Q[-X
				] - \bar{c}(aQ)\right\},
		\end{align*}
		for some $\mathcal{Q} \subseteq \mathcal{M}_1(P)$ since, by definition, the minimal penalty function
		$\bar{c}$ can be identified with $\rho^*$.\smallskip

		Let us, therefore, prove \eqref{eq:Psubset}. By cash-subadditivity and normalization we have
		$\rho(-m) \leq m$ for any $m \in \bR$ with $m\geq 0$.
		Hence,
		\begin{align*}
		   \rho(-m)= \sup_{X^\prime \in \cP} \left\{ m X^\prime(1)- \rho^*(X^\prime)\right\} \leq m, \quad \forall \ m \in \bR, m\geq 0.
		\end{align*}
		This implies that
        $$
        m X^\prime(1) - \rho^*(X^\prime) \leq m, \quad \forall m\geq 0, \forall X^\prime\in \cP,
		$$
		or, equivalently,
		\begin{align}\label{eq:mu<1}
		   m(X^\prime(1)-1)\leq \rho^*(X^\prime), \quad \forall m\geq 0, \forall X^\prime\in \cP.
		\end{align}
		Suppose now that $X^\prime(1)>1$. Since inequality \eqref{eq:mu<1} holds for any $m\geq 0$, then it would imply $\rho^*(X^\prime) =+\infty$.
		Hence the thesis.\smallskip

Implication $(iii) \Rightarrow (iv)$ is obvious, while $(iv) \Rightarrow (i)$ can be proved similarly to Lemma 4.20 in F\"{o}llmer and Schied \cite{follmers-book}.

\subsection{Proof of Proposition \ref{prop: repr-dyn-csa}} \label{appendix: proof2}

The present proof extends the one of Detlefsen and Scandolo \cite[Theorem 1]{detlef-scandolo} to the case of dynamic convex cash-subadditive risk measures.\medskip

The implication \textit{(iii) $\Rightarrow$ (ii)} is immediate.\smallskip

\textit{(ii) $\Rightarrow$ (i)} can be proved similarly to Detlefsen and Scandolo \cite{detlef-scandolo}. We include such a proof for completeness.

Suppose that $\rho_t$ can be represented as in \eqref{eq: dynamic_repres_c} by means of a penalty term $c_t$ and assume that $X_n \searrow X$ $P$-a.s. By Theorem of Monotone Convergence it follows that
\begin{align*}
    D_t\bE_Q[-X_n|\cF_t] -c_t(D Q) \nearrow  D_t\bE_Q[-X|\cF_t] -c_t(D Q)
\end{align*}
for every $Q\in \cQ_t$. Hence
\begin{align*}
     \rho_t(X)&={\rm ess.sup}_{(D,Q)\in \cD \times \cQ_t} \left\{ \lim_{n\to +\infty}  \left\{
  D_t\bE_Q[-X_n|\cF_t] -c_t(DQ)\right\} \right\}\\
&\leq  \liminf_{n\to +\infty} \left[{\rm ess.sup}_{(D,Q)\in \cD \times \cQ_t} \left\{
  D_t\bE_Q[-X_n|\cF_t] -c_t(DQ)\right\} \right]\\
&=\liminf_{n\to +\infty} \rho_t(X_n) \leq \rho_t(X),
\end{align*}
where the last inequality is due to monotonicity of $\rho_t$.
\smallskip

\textit{(i) $\Rightarrow$ (iii).} Suppose that, for any $t\in [0,T]$, $\rho_t$ is continuous from above. We have to prove that
$$
\rho_t(X)= {\rm ess.sup}_{(D,Q) \in \cD \times \cQ_t} \left\{ D_t\bE_Q[-X|\cF_t] -\bar{c}_t(D Q)\right\}.
$$
Since the inequality $\geq $ follows immediately from the definition of $\bar{c}_t$, it remains to show the reverse inequality. To this aim, it is sufficient to prove that, for any $t\in [0,T]$, it holds
    \begin{align*}
  \bE[ \rho_t(X)] \leq\bE \left[ {\rm ess.sup}_{(D,Q) \in \cD \times \cQ_t} \left\{ D_t\bE_Q[-X|\cF_t]-  \bar{c}_t (D Q) \right\} \right].
\end{align*}
In that case, indeed, the random variable
$$
Y \triangleq \rho_t(X) - {\rm ess.sup}_{(D,Q) \in \cD \times \cQ_t} \left\{ D_t\bE[-X|\cF_t]-  \bar{c}_t (D Q) \right\} \geq 0, \quad P\textrm{-a.s.},
$$
satisfies $\bE[Y]\leq 0$, implying that $Y=0$, $P$-a.s., hence the thesis.

We proceed now in successive steps.\medskip

\textit{Step 1: definition and properties of $\rho_{0,t}$.}

Let now the map $\rho_{0,t}: L^\infty \to \bR$ be defined as $\rho_{0,t}(X)\triangleq \mathbb{E}[\rho_t(X)]$
for $X\in L^\infty$. It is immediate to check that $\rho_{0,t}$ is a static convex, monotone, cash-subadditive risk measure. Furthermore, $\rho_{0,t}$ is continuous from above. Taking indeed any sequence $X_n \searrow X$, $P$-a.s., by monotone convergence and continuity from above of $\rho_t$ it holds that
$$
   \rho_{0,t}(X_n)=\bE[\rho_t(X_n)] \nearrow \bE [\rho_t(X)]=\rho_{0,t}(X).
$$
By the arguments above and by Theorem \ref{thm:repstatic} it follows that $\rho_{0,t}$ can be represented as
\begin{eqnarray}
     \rho_{0,t}(X) &=& \sup_{(a,Q) \in [0,1]\times\cQ}\left\{ a\bE_Q[-X]-\bar{c}_{0,t}(aQ)\right\} \notag\\
     &=& \sup_{\mu \in \mathcal{S} }\left\{ \bE_\mu[-X]-\bar{c}_{0,t} (\mu)\right\} \label{eq:rho_0,t}
\end{eqnarray}
where $\cQ \subseteq \cM_{1}(P)$, $\mathcal{S} \subseteq \cM_{s}(P)$ and
\begin{eqnarray*}
          \bar{c}_{0,t} (aQ) &=& \sup_{X\in L^\infty} \left\{ a\bE_{Q}[-X]-\rho_{0,t}(X)\right\},  \quad { \rm \ for \ any \ } Q \in \cQ \\
          \bar{c}_{0,t}(\mu) &=& \sup_{X\in L^\infty} \left\{ \bE_{\mu}[-X]-\rho_{0,t}(X)\right\}, \qquad {\rm \ for\ any\ } \mu \in \mathcal{S}.
\end{eqnarray*}
We need to prove that $\rho_{0,t}$ can be written also in the following way:
\begin{equation} \label{eq: rho-0,t-2}
    \rho_{0,t}(X) = \sup_{(D,Q) \in \mathcal{D}\times\cQ_t}\left\{ \mathbb{E}_{P}[D_t\bE_{Q_t}[-X|\cF_t]]-\bar{c}_{0,t}(D Q)\right\}.
\end{equation}
\smallskip

\textit{Step 2: $\mu$ can be decomposed as $\mu=D_t \tilde{Q}$ for some $D_t \in \mathcal{D}$ and $\tilde{Q}\in \mathcal{Q}_t$.}

We proceed by proving that if $\mu=a Q$ (with $Q\in \cM_1(P)$, $Q \ll P$ and $a \in [0,1]$) satisfies $\bar{c}_{0,t}(\mu) < + \infty$, then
there exist $D_t \in \cD$ and $\tilde{Q} \in \cQ_t$  (hence $0 \leq D_t \leq 1$ while
$\tilde{Q}= P $ on $\cF_t$)
satisfying $\mu = D_t \tilde{Q}$.

Denote by $Z_T=\frac{dQ}{dP}$, by $Z_t \triangleq \bE_P \left[\left. \frac{dQ}{dP} \right| \mathcal{F}_t\right]$ and by $N_{0,t}$
the set $N_{0,t} \triangleq \left\{\omega \in \Omega: \, Z_t (\omega)=0 \right\}$.
Clearly, $N_{0,t}\in \cF_t$. Moreover,
we notice that ${\bf 1}_{N_{0,t}}Z_T \equiv 0$ $P$-a.s..
Indeed,
\begin{align*}
\mathbb{E}_P [{\bf 1}_{N_{0,t}} Z_T ]& =
\mathbb{E}_P[\mathbb{E}_P[{\bf 1}_{N_{0,t}} Z_T|\cF_t]] \\
&=\mathbb{E}_P[{\bf 1}_{N_{0,t}}\mathbb{E}_P[ Z_T|\cF_t]]\\
&= \mathbb{E}_P[{\bf 1}_{N_{0,t}} Z_t]=0.
\end{align*}
The argument above and
${\bf 1}_{N_{0,t}}Z_T\geq 0$, $P$-a.s., imply that
${\bf 1}_{N_{0,t}}Z_T\equiv 0$, $P$-a.s..

Set now
\begin{align*}
\tilde{Q} (B)&\triangleq
   \int_{B \setminus N_{0,t}} Z_t^{-1} (\omega){\rm d}Q( \omega)
	 + \int_{B \cap N_{0,t}} {\rm d} P( \omega)\qquad \textrm{for any $B\in \cF_T$}
\\ D_t&\triangleq aZ_t
\end{align*}
Hence $D_t \geq 0$, $P$-a.s., and $D_t=0$ on $N_{0,t}$. Furthermore, it can be checked that $\tilde{Q}$ is a probability measure.
We verify now that $\tilde{Q}=P$ on $\cF_t$. For any $B\in \cF_t$, indeed, we have
\begin{equation}\label{eq:N0t}
	\tilde{Q}(B) = \int_{B \setminus N_{0,t}}  Z_t^{-1}(\omega)Z_T(\omega){\rm d}P(\omega)+\int_{B \cap N_{0,t}} {\rm d}P(\omega).
\end{equation}
Since $B \setminus N_{0,t} \in \cF_t$, using the definition of conditional expectation, we obtain
\begin{eqnarray*}
    &&\int_{B \setminus N_{0,t}}  Z_t^{-1}(\omega)Z_T(\omega){\rm d}P(\omega) = \int_\Omega {\bf 1}_{B \setminus N_{0,t}} (\omega) Z_t^{-1}(\omega)Z_T(\omega){\rm d}P(\omega)\\
		&&=\mathbb{E}_P[ \mathbb{E}_P[{\bf 1}_{B \setminus N_{0,t}}Z_t^{-1} Z_T|\cF_t]]=
		\mathbb{E}_P[{\bf 1}_{B \setminus N_{0,t}}  Z_t^{-1}Z_t]=P(B\setminus N_{0,t}).		
\end{eqnarray*}
Using the last equality, \eqref{eq:N0t} becomes $\tilde{Q}(B)
		=P(B\setminus N_{0,t})+P(B \cap N_{0,t})=P(B)$. Hence $\tilde{Q}=P$ on $\mathcal{F}_t$.

We prove now that $\mu=D_t \tilde{Q}$ on $\cF_T$.
Indeed, for any $C\in \cF_T$
\begin{align*}
    \mu(C) &= \int_{C  \setminus N_{0,t}} a{\rm d}Q(\omega) + \int_{C  \cap N_{0,t}} a{\rm d}Q(\omega) \\
		&= \int_{C  \setminus N_{0,t}} a Z_t(\omega){\rm d}\tilde{Q}(\omega) +  \int_{C  \cap N_{0,t}} a Z_T(\omega){\rm d}P(\omega) \\
        &= \int_{C  \setminus N_{0,t}} D_t(\omega){\rm d}\tilde{Q}(\omega)+  \int_{C  \cap N_{0,t}} a Z_T(\omega){\rm d}P(\omega) \\
        &=\int_{C  } D_t(\omega){\rm d}\tilde{Q}(\omega)
		\end{align*}
since $\int_{C  \cap N_{0,t}} a Z_T(\omega){\rm d}P(\omega)=0=\int_{C  \cap N_{0,t}} D_t(\omega){\rm d}\tilde{Q}(\omega)$. Hence $\mu=D_t\tilde{Q}$.

We already know that $D_t \geq 0$, $P$-a.s.. It remains to verify that $D_t \leq 1$, $P$-a.s.. Suppose now by contradiction that $P(D_t>1)>0$ and set $A=\{ \omega: D_t (\omega)>1 \}$. Since
$D_t$ is $\cF_t$-measurable,  we have that $A\in \cF_t$ and
\begin{align*}
   \ P(A) < \int_A D_t(\omega) {\rm d}P(\omega)
	=& \ \int_{A \setminus N_{0,t}} a Z_t(\omega) {\rm d}\tilde{Q}(\omega) +\int_{A \cap N_{0,t}} a Z_t(\omega) {\rm d}P(\omega)  \\
	 =&\ \int_{A \setminus N_{0,t}} a  Z_t(\omega) Z_t^{-1}(\omega) {\rm d}Q(\omega)  +\int_{A \cap N_{0,t}} a  {\rm d}Q(\omega) \\
	 =&\ aQ(A\setminus N_{0,t}) +aQ(A\cap N_{0,t}) =\ \mu(A).
\end{align*}
By cash-subadditivity and regularity of $\rho_t$, it follows therefore that
\begin{align*}
\bar{c}_{0,t} (\mu) &= \sup_{X \in L^{\infty}} \{ \mathbb{E}_{\mu} [-X] - \rho_{0,t}(X) \} \\
&\geq \sup_{\lambda >0} \{ \mathbb{E}_{\mu} [\lambda 1_A] - \rho_{0,t}(-\lambda 1_A) \} \\
&= \sup_{\lambda >0} \{ \lambda \mu(A) - \mathbb{E}[\rho_{t}(-\lambda 1_A)] \} \\
&= \sup_{\lambda >0} \{ \lambda \mu(A) - \mathbb{E}[1_A\rho_{t}(-\lambda) ] \} \\
&\geq \sup_{\lambda >0} \{ \lambda (\mu(A) - P(A)) \}=+\infty
\end{align*}
that is a contradiction with the assumption $\bar{c}_{0,t} (\mu) <+\infty$.
Hence $0 \leq D_t \leq 1$, $P$-a.s.

By all the arguments above we deduce that
\begin{equation*}
    \mathbb{E}_\mu[X] =\mathbb{E}_{\tilde{Q}}[D_t X]=\mathbb{E}_{\tilde{Q}}[\mathbb{E}_{\tilde{Q}}[ D_t  X|\cF_t]] =  \mathbb{E}_{P}[D_t\mathbb{E}_{\tilde{Q}}[ X|\cF_t]],
\end{equation*}
hence
\begin{equation*}
\bar{c}_{0,t}(DQ) = \sup_{X \in L^{\infty}} \{\mathbb{E}_{P}[D_t\mathbb{E}_{\tilde{Q}}[ X|\cF_t]] - \rho _{0,t} (X) \}.
\end{equation*}
\smallskip

\textit{Step 3: $\bE[\bar{c}_{t}(DQ)]=\bar{c}_{0,t}(DQ)$ for any $D \in \mathcal{D}$, $Q \in \cQ_t$.}

We prove now that $\bE[\bar{c}_{t}(DQ)]=\bar{c}_{0,t}(DQ)$ for any $D \in \mathcal{D}$ and $Q \in \cQ_t$.
Fix $(D,Q) \in \cD \times \cQ_t$ arbitrarily and set
\begin{align*}
   \cC_{D,Q}\triangleq    \left\{ D_t \bE_{Q}[-X|\cF_t]-\rho_t(X)| X\in L^\infty\right\}.
\end{align*}
We claim that $\cC_{D,Q}$ is upward directed, that is for any $X,Y\in L^\infty$ there exists
$\bar{X} \in L^{\infty}$ such that
$$
D_t\bE_{Q}[-\bar{X}|\cF_t]-\rho_t(\bar{X})= \max\left\{D_t\bE_{Q}[-X|\cF_t]-\rho_t(X);D_t\bE_{Q}[-Y|\cF_t]-\rho_t(Y)\right\}
$$
(hence belonging to $\cC_{D,Q}$).

Indeed, fix $X,Y \in L^\infty$ and set $Z\triangleq X {\bf 1}_A + Y {\bf 1}_{A^c} \in L^\infty$, where
$$
     A \triangleq \left\{ D_t\bE_{Q}[-X|\cF_t]-\rho_t(X) \geq D_t\bE_{Q}[-Y|\cF_t]-\rho_t(Y)\right\}.
$$
Obviously, $A\in \cF_t$.
By the regularity of $\rho_t$,
$$
     \rho_t(Z)= \rho_t(X {\bf 1}_A + Y {\bf 1}_{A^c} ) =  {\bf 1}_A \rho_t(X ) +{\bf 1}_{A^c} \rho(Y).
$$
Hence
\begin{eqnarray*}
     &&D_t\bE_{Q}[-Z|\cF_t]-\rho_t(Z) \\
     &&= D_t\bE_{Q}[-X {\bf 1}_A - Y {\bf 1}_{A^c}|\cF_t]-\rho_t(X {\bf 1}_A + Y {\bf 1}_{A^c})  \\
  &&=  (D_t\bE_{Q}[-X |\cF_t]-\rho_t(X)){\bf 1}_A + (D_t\bE_{Q}[-Y |\cF_t]-\rho_t(Y)){\bf 1}_{A^c}\\
&&=  \max\left\{D_t\bE_{Q}[-X |\cF_t]-\rho_t(X)); D_t\bE_{Q}[-Y |\cF_t]-\rho_t(Y))\right\}.
\end{eqnarray*}
Therefore, $\cC_{D,Q}$ is upward directed.
By F\"{o}llmer and Schied \cite[Theorem A.32]{follmers-book}, it follows that
$$
  \bE\left[  {\rm ess. sup} \ \cC_{D,Q}\right]=   {\rm ess.sup}_{X \in L^\infty}
 \bE \left[ D_t\bE_{Q}[-X |\cF_t]-\rho_t(X)\right]
$$
Hence, for any $(D,Q)\in \cD\times \cQ_t$,
\begin{align*}
    \bE[\bar{c}_t(D Q)] &= \bE\left[{\rm ess. sup}_{ X \in L^\infty} \left\{D_t\bE_{Q}[-X |\cF_t]-\rho_t(X)\right\} \right] \\
& =  {\rm ess. sup}_{X \in L^\infty} \bE \left\{ D_t\bE_{Q}[-X |\cF_t]-\bE[\rho_t(X)]\right\} \\
&={\rm ess. sup}_{X \in L^\infty} \left\{\bE[D_t\bE_{Q}[-X |\cF_t]]-\rho_{0,t}(X)\right\}\\
&=\bar{c}_{0,t} (D Q ).
\end{align*}
\smallskip

\textit{Step 4: final arguments.}
Finally, by \eqref{eq:rho_0,t} we obtain
\begin{align*}
 \rho_{0,t}(X) &=  {\rm sup}_{(D,Q)\in \cD\times \cQ_t} \left\{\bE[D_t\bE_{Q}[-X|\cF_t ]]-\bE[\bar{c}_{t} (D Q)]\right\} \\
&\leq  \bE \left[   {\rm sup}_{(D,Q)\in \cD\times \cQ_t} \left\{D_t\bE_{Q}[-X |\cF_t]]-\bar{c}_{t} (D Q)]\right\}  \right]
\end{align*}
that completes the proof.

\subsection{Proof of Theorem \ref{thm: time-cons}} \label{appendix: proof3}

The present proof is in line with the one of Theorem 4.4 of Bion-Nadal \cite{bion-nadal-FS} for dynamic convex and cash-additive risk measures.

Let $s,t \in [0,T]$ (with $s \leq t$) and $X \in L^{\infty}(\mathcal{F}_t)$ be fixed arbitrarily and set
$$
\mathcal{C}_X \triangleq \left\{D_{s,t} \bE_Q \left[\left. -X \right| \mathcal{F}_s \right]-c_{s,t}(DQ) | D \in \mathcal{D}, Q \in \mathcal{Q} \right\}.
$$
Let us prove that $\mathcal{C}_X$ is upward directed, that is: for any $D^1,D^2 \in \mathcal{D}$, $Q_1,Q_2 \in \mathcal{Q}$ there exist $\bar{D} \in \mathcal{D}$, $\bar{Q} \in \mathcal{Q}$ such that
$$
\max_{i=1,2} \left\{D^i_{s,t} \bE_{Q_i} \left[\left. -X \right| \mathcal{F}_s \right]-c_{s,t}(D^i Q_i) \right\}=\bar{D}_{s,t} \bE_{\bar{Q}} \left[\left. -X \right| \mathcal{F}_s \right]-c_{s,t}(\bar{D}\bar{Q})
$$
(hence belonging to $\mathcal{C}_X$).

Let $D^1,D^2 \in \mathcal{D}$ and $Q_1,Q_2 \in \mathcal{Q}$ and set
\begin{equation} \label{eq: def_A_for-upward}
A \triangleq \left\{D^1_{s,t} \bE_{Q_1} \left[\left. -X \right| \mathcal{F}_s \right]-c_{s,t}(D^1 Q_1) > D^2_{s,t} \bE_{Q_2} \left[\left. -X \right| \mathcal{F}_s \right]-c_{s,t}(D^2 Q_2) \right\}.
\end{equation}
Obviously, $A \in \mathcal{F}_s$.
By stability of $\mathcal{D}$ and $\mathcal{Q}$, there exist $\bar{D} \in \mathcal{D}$ and $\bar{Q} \in \mathcal{Q}$ such that
\begin{eqnarray*}
\bE_{\bar{Q}} \left[\left. Y \right| \mathcal{F}_s \right] &=& 1_A \bE_{Q_1} \left[\left. Y \right| \mathcal{F}_s \right] + 1_{A^c} \bE_{Q_2} \left[\left. Y \right| \mathcal{F}_s \right], \quad \forall Y \in L^{\infty}(\mathcal{F}_t) \\
\bar{D}_{s,t} &=& 1_A D^1 _{s,t} + 1_{A^c} D^2 _{s,t}.
\end{eqnarray*}
By the arguments above and by generalized locality of $c$ it follows that
\begin{eqnarray*}
1_A c_{s,t}(\bar{D} \bar{Q}) &=& 1_A c_{s,t}(D^1 Q_1) \\
1_{A^c} c_{s,t}(\bar{D} \bar{Q}) &=& 1_{A^c} c_{s,t}(D^2 Q_2),
\end{eqnarray*}
then
\begin{eqnarray*}
&&\bar{D}_{s,t} \bE_{\bar{Q}} \left[\left. -X \right| \mathcal{F}_s \right]-c_{s,t}(\bar{D}\bar{Q})\\
&=& \bar{D}_{s,t} \left( 1_A \bE_{Q_1} \left[\left. -X \right| \mathcal{F}_s \right]+1_{A^c} \bE_{Q_2} \left[\left. -X \right| \mathcal{F}_s \right] \right) -1_A c_{s,t}(D^1 Q_1)-1_{A^c} c_{s,t} (D^2 Q_2) \\
&=& 1_A \left(\bar{D}_{s,t} \bE_{Q_1} \left[\left. -X \right| \mathcal{F}_s \right]-c_{s,t}(D^1 Q_1)\right) +1_{A^c} \left(\bar{D}_{s,t} \bE_{Q_2} \left[\left. -X \right| \mathcal{F}_s \right] - c_{s,t} (D^2 Q_2)\right) \\
&=& 1_A \left(D^1 _{s,t} \bE_{Q_1} \left[\left. -X \right| \mathcal{F}_s \right]-c_{s,t}(D^1 Q_1)\right) +1_{A^c} \left(D^2 _{s,t} \bE_{Q_2} \left[\left. -X \right| \mathcal{F}_s \right] - c_{s,t} (D^2 Q_2)\right) \\
&=& \max_{i=1,2} \left\{D^i_{s,t} \bE_{Q_i} \left[\left. -X \right| \mathcal{F}_s \right]-c_{s,t}(D^i Q_i) \right\}.
\end{eqnarray*}
Hence the set $\mathcal{C}_X$ is upward directed.

By Theorem A.32 of F\"{o}llmer and Schied \cite{follmers-book} (see also Detlefsen and Scandolo \cite{detlef-scandolo}), it follows that there exists an increasing sequence $\left\{ D^n _{s,t} \bE_{Q_n} \left[\left. -X \right| \mathcal{F}_s \right]-c_{s,t}(D^n Q_n) \right\}_n \subseteq \mathcal{C}_X$  such that $\esssup \mathcal{C}_X = \lim _{n \to +\infty} \{ D^n _{s,t} \bE_{Q_n} \left[\left. -X \right| \mathcal{F}_s \right]-c_{s,t}(D^n Q_n) \}$. Hence
\begin{eqnarray}
&& \rho_{s,t} \left(-\rho_{t,u} (X) \right) \\
&=& \esssup _{D \in \mathcal{D}, Q \in \mathcal{Q}} \left\{D_{s,t} \bE_Q \left[\left. \rho_{t,u}(X) \right| \mathcal{F}_s \right] -c_{s,t}(DQ) \right\} \notag \\
&=& \lim_n \left\{ D^n _{s,t} \bE_{Q_n} \left[\left. \rho_{t,u}(X) \right| \mathcal{F}_s \right] -c_{s,t}(D^n Q_n) \right\} \notag \\
&=& \lim_n \left\{D^n _{s,t} \bE_{Q_n} \left[\left. \lim_k \left\{D^k _{t,u} \bE_{Q_k} \left[\left. -X \right| \mathcal{F}_t \right] -c_{t,u}(D^k Q_k) \right\} \right| \mathcal{F}_s \right] -c_{s,t}(D^n Q_n) \right\} \notag \\
&=& \lim_n \left\{D^n _{s,t} \lim_k \left\{ \bE_{Q_n} \left[\left. D^k _{t,u} \bE_{Q_k} \left[\left. -X \right| \mathcal{F}_t \right] -c_{t,u}(D^k Q_k) \right| \mathcal{F}_s \right]\right\} -c_{s,t}(D^n Q_n) \right\} \notag \\
&=& \lim_n \lim_k \left\{D^n _{s,t}  \bE_{Q_n} \left[\left. D^k _{t,u} \bE_{Q_k} \left[\left. -X \right| \mathcal{F}_t \right]\right| \mathcal{F}_s \right] - D^n _{s,t} \bE_{Q_n} \left[ \left. c_{t,u}(D^k Q_k) \right| \mathcal{F}_s \right] -c_{s,t}(D^n Q_n) \right\} \notag \\
&=& \lim_n \lim_k \left\{ D^n _{s,t}  \bE_{Q_n} \left[\left. D^k _{t,u} \bE_{Q_k} \left[\left. -X \right| \mathcal{F}_t \right] \right| \mathcal{F}_s \right] -c_{s,u}(D^{n,k} Q_{n,k}) \right\} \label{eq: eq10001} \\
&=& \lim_n \lim_k \left\{  D^{n,k} _{s,u} \bE_{Q_{n,k}} \left[\left. -X \right| \mathcal{F}_s \right]  -c_{s,u}(D^{n,k} Q_{n,k}) \right\} \label{eq: eq 10002} \\
& \leq & \rho_{s,u}(X), \notag
\end{eqnarray}
where \eqref{eq: eq10001} and \eqref{eq: eq 10002} are due to the generalized cocycle (Cc) of $c$ and to the stability for joint pasting (QDa), and $D^{n,k}$ and $Q_{n,k}$ denote the pasting between $n$ and $k$-versions.

It remains to prove the converse inequality, that is $\rho_{s,t} \left(-\rho_{t,u} (X) \right) \geq \rho_{s,u}(X)$. Proceeding as previously, we get
\begin{eqnarray}
&&\rho_{s,u} (X) \\
&=& \esssup _{D \in \mathcal{D}, Q \in \mathcal{Q}} \left\{D_{s,u} \bE_Q \left[\left. -X \right| \mathcal{F}_s \right] -c_{s,u}(DQ) \right\} \notag \\
&=& \lim_m \left\{D^m _{s,u} \bE_{Q_m} \left[\left. -X \right| \mathcal{F}_s \right] -c_{s,u}(D^m Q_m) \right\} \notag \\
&=& \lim_m \left\{D^m _{s,u} \bE_{Q_m} \left[\left. -X \right| \mathcal{F}_s \right] -c_{s,t}(D^m Q_m)-\bE_{Q_m} \left[ \left. D^m _{s,t} c_{t,u} (D^m Q_m) \right| \mathcal{F}_s  \right] \right\} \notag \\
&=& \lim_m \left\{D^m _{s,t} \bE_{Q_m} \left[ \left. D^m _{t,u} \bE_{Q_m} \left[\left. -X \right| \mathcal{F}_t \right] \right| \mathcal{F}_s \right] -c_{s,t}(D^m Q_m)-\bE_{Q_m} \left[ \left. D^m _{s,t} c_{t,u} (D^m Q_m) \right| \mathcal{F}_s  \right] \right\} \label{eq: pasting D and Q} \\
&=& \lim_m \left\{D^m _{s,t} \bE_{Q_m} \left[ \left. D^m _{t,u} \bE_{Q_m} \left[\left. -X \right| \mathcal{F}_t \right]-c_{t,u} (D^m Q_m) \right| \mathcal{F}_s \right] -c_{s,t}(D^m Q_m)\right\} \notag \\
&\leq & \lim_m \left\{D^m _{s,t} \bE_{Q_m} \left[ \left. \rho_{t,u} (X) \right| \mathcal{F}_s \right] -c_{s,t}(D^m Q_m) \right\} \notag \\
& \leq & \rho_{s,t} \left(-\rho_{t,u} (X) \right), \notag
\end{eqnarray}
where \eqref{eq: pasting D and Q} is due to assumption (QDa).
This concludes the proof.

\end{document}